\DeclareSIUnit{\permille}{\text{\textperthousand}}
\pgfplotsset{compat=1.17} % should be done right after loading
\tikzset{external/system call={lualatex \tikzexternalcheckshellescape -halt-on-error -interaction=batchmode -jobname "\image" "\texsource"}}
\pgfplotsset{/pgfplots/colormap={comsolthermal}{[1cm] rgb255(0cm)=(255,255,255) rgb255(3cm)=(255,255,0) rgb255(6cm)=(255,0,0) rgb255(8cm)=(100,0,0)}}
\pgfplotsset{/pgfplots/colormap={reverse viridis}{indices of colormap={\pgfplotscolormaplastindexof{viridis},...,0 of viridis}}}
\pgfplotsset{colormap/viridis} % creates the colormap data, now in tikz set colormap name=viridis
\pgfplotsset{
colormap={inferno}{[1pt] 
rgb(0pt)=(0.00146159,0.000466128,0.0138655); rgb(1pt)=(0.00399889,0.00287953,0.0280795); rgb(2pt)=(0.00795128,0.00636563,0.0480947); rgb(3pt)=(0.0133413,0.0107186,0.0695703); rgb(4pt)=(0.0202832,0.0157475,0.0913609); rgb(5pt)=(0.0289796,0.021235,0.113568); rgb(6pt)=(0.0397217,0.0269215,0.136281); rgb(7pt)=(0.0517865,0.0325368,0.159532); rgb(8pt)=(0.0643438,0.0377499,0.18329); rgb(9pt)=(0.0775923,0.0421673,0.207524); rgb(10pt)=(0.0916217,0.0453405,0.232044); rgb(11pt)=(0.106525,0.0471492,0.256591); rgb(12pt)=(0.122332,0.047544,0.280778); rgb(13pt)=(0.138952,0.0465727,0.304032); rgb(14pt)=(0.156258,0.0445008,0.32582); rgb(15pt)=(0.173955,0.0418068,0.345479); rgb(16pt)=(0.191824,0.0391205,0.362738); rgb(17pt)=(0.209623,0.0371602,0.377434); rgb(18pt)=(0.227215,0.0364061,0.389682); rgb(19pt)=(0.244559,0.0370282,0.399785); rgb(20pt)=(0.261616,0.0391236,0.407959); rgb(21pt)=(0.278437,0.0424959,0.414594); rgb(22pt)=(0.295039,0.0468597,0.419873); rgb(23pt)=(0.311475,0.0519108,0.424059); rgb(24pt)=(0.327775,0.057427,0.427304); rgb(25pt)=(0.343975,0.0632266,0.429737); rgb(26pt)=(0.360096,0.0691767,0.431479); rgb(27pt)=(0.376175,0.0751899,0.432574); rgb(28pt)=(0.392211,0.0812091,0.433118); rgb(29pt)=(0.408237,0.0871945,0.433122); rgb(30pt)=(0.424255,0.0931259,0.432631); rgb(31pt)=(0.440265,0.0989922,0.431672); rgb(32pt)=(0.456291,0.104796,0.430239); rgb(33pt)=(0.472328,0.110547,0.428334); rgb(34pt)=(0.488378,0.116242,0.425992); rgb(35pt)=(0.504438,0.121907,0.423183); rgb(36pt)=(0.520504,0.127551,0.419927); rgb(37pt)=(0.536573,0.133198,0.416213); rgb(38pt)=(0.552636,0.138866,0.412039); rgb(39pt)=(0.568685,0.144579,0.407414); rgb(40pt)=(0.58471,0.150362,0.402323); rgb(41pt)=(0.600697,0.156242,0.396784); rgb(42pt)=(0.616635,0.162246,0.390784); rgb(43pt)=(0.632507,0.168405,0.384337); rgb(44pt)=(0.648296,0.174748,0.377445); rgb(45pt)=(0.663988,0.181304,0.370113); rgb(46pt)=(0.679556,0.188111,0.36236); rgb(47pt)=(0.694987,0.195192,0.354191); rgb(48pt)=(0.710252,0.202589,0.345625); rgb(49pt)=(0.725334,0.210323,0.336673); rgb(50pt)=(0.740206,0.21843,0.327359); rgb(51pt)=(0.754841,0.226936,0.317702); rgb(52pt)=(0.76922,0.235863,0.307723); rgb(53pt)=(0.783305,0.245244,0.297441); rgb(54pt)=(0.797086,0.255086,0.286885); rgb(55pt)=(0.81052,0.265418,0.276073); rgb(56pt)=(0.823595,0.276225,0.265041); rgb(57pt)=(0.836281,0.287556,0.253792); rgb(58pt)=(0.84855,0.299388,0.242357); rgb(59pt)=(0.860396,0.311742,0.230744); rgb(60pt)=(0.871771,0.324575,0.218982); rgb(61pt)=(0.882687,0.33792,0.207066); rgb(62pt)=(0.893103,0.351733,0.195014); rgb(63pt)=(0.903021,0.366017,0.182821); rgb(64pt)=(0.912424,0.38075,0.170486); rgb(65pt)=(0.921294,0.395903,0.158006); rgb(66pt)=(0.929644,0.411479,0.145367); rgb(67pt)=(0.937433,0.427418,0.132564); rgb(68pt)=(0.944691,0.443736,0.119586); rgb(69pt)=(0.95139,0.460386,0.106433); rgb(70pt)=(0.957538,0.47736,0.0931233); rgb(71pt)=(0.963133,0.494637,0.0797067); rgb(72pt)=(0.968159,0.512184,0.0663111); rgb(73pt)=(0.972638,0.530008,0.0531745); rgb(74pt)=(0.976535,0.54806,0.040851); rgb(75pt)=(0.979876,0.566352,0.0308821); rgb(76pt)=(0.982642,0.584853,0.0250467); rgb(77pt)=(0.984832,0.603555,0.0238049); rgb(78pt)=(0.986452,0.622437,0.0276119); rgb(79pt)=(0.987475,0.641499,0.0373252); rgb(80pt)=(0.987927,0.660706,0.0521491); rgb(81pt)=(0.987772,0.680075,0.0698596); rgb(82pt)=(0.987033,0.699564,0.0894211); rgb(83pt)=(0.985695,0.719174,0.110455); rgb(84pt)=(0.983755,0.738892,0.132774); rgb(85pt)=(0.981237,0.758673,0.1563); rgb(86pt)=(0.978132,0.778541,0.181134); rgb(87pt)=(0.974497,0.798402,0.207292); rgb(88pt)=(0.970345,0.818284,0.235014); rgb(89pt)=(0.965794,0.838073,0.264413); rgb(90pt)=(0.960969,0.857712,0.295748); rgb(91pt)=(0.95611,0.877104,0.329299)}
}
\pgfplotsset{
colormap={wave}{[1pt] 
rgb(0pt)=(0.0196078,0.188235,0.380392); rgb(1pt)=(0.0424814,0.213009,0.426102); rgb(2pt)=(0.0622178,0.236955,0.467895); rgb(3pt)=(0.077739,0.260086,0.505806); rgb(4pt)=(0.0901576,0.282422,0.539935); rgb(5pt)=(0.100125,0.303988,0.570432); rgb(6pt)=(0.10814,0.32482,0.597485); rgb(7pt)=(0.114662,0.34496,0.621312); rgb(8pt)=(0.120144,0.364464,0.642159); rgb(9pt)=(0.12506,0.383394,0.660287); rgb(10pt)=(0.129908,0.401823,0.675977); rgb(11pt)=(0.135197,0.419832,0.68952); rgb(12pt)=(0.141437,0.437513,0.701217); rgb(13pt)=(0.149104,0.454964,0.711377); rgb(14pt)=(0.158618,0.472291,0.720316); rgb(15pt)=(0.170318,0.489608,0.728353); rgb(16pt)=(0.184453,0.507033,0.735813); rgb(17pt)=(0.201192,0.52469,0.743022); rgb(18pt)=(0.220645,0.542708,0.750312); rgb(19pt)=(0.242891,0.561218,0.758017); rgb(20pt)=(0.267998,0.580354,0.766473); rgb(21pt)=(0.295783,0.60016,0.775871); rgb(22pt)=(0.325563,0.620483,0.786068); rgb(23pt)=(0.356746,0.641131,0.796874); rgb(24pt)=(0.388862,0.661907,0.808098); rgb(25pt)=(0.421521,0.682609,0.819549); rgb(26pt)=(0.454378,0.703027,0.831034); rgb(27pt)=(0.487118,0.722949,0.842357); rgb(28pt)=(0.519435,0.742156,0.853324); rgb(29pt)=(0.551027,0.760431,0.863736); rgb(30pt)=(0.581594,0.777552,0.873396); rgb(31pt)=(0.610959,0.793422,0.882219); rgb(32pt)=(0.639151,0.808154,0.89031); rgb(33pt)=(0.666217,0.821888,0.897796); rgb(34pt)=(0.6922,0.834765,0.904803); rgb(35pt)=(0.717141,0.846932,0.911458); rgb(36pt)=(0.741082,0.858537,0.91789); rgb(37pt)=(0.764062,0.869733,0.924226); rgb(38pt)=(0.786122,0.880673,0.930595); rgb(39pt)=(0.807304,0.891515,0.937126); rgb(40pt)=(0.827646,0.902413,0.943944); rgb(41pt)=(0.847152,0.913342,0.95097); rgb(42pt)=(0.86577,0.924043,0.957864); rgb(43pt)=(0.883432,0.934237,0.964264); rgb(44pt)=(0.900059,0.943644,0.969805); rgb(45pt)=(0.915562,0.951981,0.974124); rgb(46pt)=(0.929844,0.958965,0.976854); rgb(47pt)=(0.9428,0.964308,0.977629); rgb(48pt)=(0.954317,0.967728,0.976085); rgb(49pt)=(0.964274,0.968939,0.971858); rgb(50pt)=(0.972548,0.967668,0.964605); rgb(51pt)=(0.979157,0.963873,0.954273); rgb(52pt)=(0.984266,0.957739,0.941103); rgb(53pt)=(0.988048,0.949461,0.92535); rgb(54pt)=(0.990675,0.939239,0.907272); rgb(55pt)=(0.992316,0.92727,0.887125); rgb(56pt)=(0.993144,0.913752,0.865169); rgb(57pt)=(0.993327,0.89888,0.841657); rgb(58pt)=(0.993036,0.88285,0.816843); rgb(59pt)=(0.992441,0.865852,0.790977); rgb(60pt)=(0.991702,0.848066,0.764304); rgb(61pt)=(0.990786,0.829567,0.737026); rgb(62pt)=(0.989523,0.810346,0.709314); rgb(63pt)=(0.987754,0.790391,0.68134); rgb(64pt)=(0.985335,0.769692,0.653273); rgb(65pt)=(0.982134,0.748237,0.625281); rgb(66pt)=(0.978035,0.726013,0.597534); rgb(67pt)=(0.97293,0.703009,0.5702); rgb(68pt)=(0.966722,0.679209,0.543445); rgb(69pt)=(0.959326,0.654601,0.517436); rgb(70pt)=(0.950676,0.629178,0.492329); rgb(71pt)=(0.940867,0.603023,0.468165); rgb(72pt)=(0.930083,0.57627,0.444922); rgb(73pt)=(0.918502,0.549048,0.422571); rgb(74pt)=(0.906296,0.521478,0.401083); rgb(75pt)=(0.893637,0.493674,0.380426); rgb(76pt)=(0.880689,0.465738,0.360567); rgb(77pt)=(0.867618,0.437764,0.34147); rgb(78pt)=(0.854583,0.409835,0.323099); rgb(79pt)=(0.841746,0.382016,0.305418); rgb(80pt)=(0.82923,0.354351,0.288394); rgb(81pt)=(0.816912,0.3268,0.272046); rgb(82pt)=(0.804566,0.299286,0.256409); rgb(83pt)=(0.791973,0.271728,0.241519); rgb(84pt)=(0.778924,0.244033,0.22741); rgb(85pt)=(0.765216,0.21609,0.214113); rgb(86pt)=(0.750653,0.187753,0.201656); rgb(87pt)=(0.735043,0.158809,0.190065); rgb(88pt)=(0.718202,0.128912,0.17936); rgb(89pt)=(0.699949,0.0974024,0.169557); rgb(90pt)=(0.680108,0.0627338,0.16067); rgb(91pt)=(0.658509,0.0222288,0.152704); rgb(92pt)=(0.634987,0,0.145661); rgb(93pt)=(0.609383,0,0.139538); rgb(94pt)=(0.581545,0,0.134326); rgb(95pt)=(0.551327,0,0.130015); rgb(96pt)=(0.518589,0,0.126594); rgb(97pt)=(0.483198,0,0.124051); rgb(98pt)=(0.445022,0,0.122378); rgb(99pt)=(0.403922,1.01308e-16,0.121569)}
}
\pgfplotsset{
colormap={reverse inferno}{
    indices of colormap={
        \pgfplotscolormaplastindexof{inferno},...,0 of inferno}
},
}
\definecolor{inferno0}{rgb}{0, 0, 0}
\definecolor{inferno1}{rgb}{0.001461590960000,   0.000466127766000,   0.013865520000000}
\definecolor{inferno2}{rgb}{0.308956285665458,   0.051096710436248,   0.423484299000000}
\definecolor{inferno3}{rgb}{0.674779572538462,   0.185988386967698,   0.364790491307586}
\definecolor{inferno4}{rgb}{0.948363139076923,   0.452656844940405,   0.112525223996407}
\definecolor{inferno5}{rgb}{0.954311759692308,   0.884450513620312,   0.342825456922373}
\definecolor{viridis0}{rgb}{0, 0, 0}
\definecolor{viridis1}{rgb}{0.267004010000000,   0.004874330000000,   0.329415190000000}
\definecolor{viridis2}{rgb}{0.230223510512892,   0.321288876115496,   0.545488357500000}
\definecolor{viridis3}{rgb}{0.128151799694423,   0.565106970000000,   0.550904139912083}
\definecolor{viridis4}{rgb}{0.362853991628797,   0.786695125000000,   0.386551780990772}
\definecolor{viridis5}{rgb}{0.993247890000000,   0.906156570000000,   0.143936200000000}
\definecolor{blue_dark}{rgb}{0.121, 0.466, 0.706}
\definecolor{blue_light}{rgb}{0.682, 0.780, 0.910}
\definecolor{orange_dark}{rgb}{1.000, 0.498, 0.055}
\definecolor{orange_light}{rgb}{1.000, 0.733, 0.476}
\definecolor{green_dark}{rgb}{0.172, 0.627, 0.173}
\definecolor{green_light}{rgb}{0.596, 0.874, 0.541}
\definecolor{red_dark}{rgb}{0.839, 0.153, 0.157}
\definecolor{red_light}{rgb}{1.000, 0.596, 0.588}
\definecolor{violet_dark}{rgb}{0.580, 0.404, 0.741}
\definecolor{violet_light}{rgb}{0.772, 0.690, 0.835}
\definecolor{matlabBlue}{rgb}{0.00000,0.44700,0.74100}
\definecolor{matlabOrange}{rgb}{0.85000,0.32500,0.09800}
\definecolor{matlabYellow}{rgb}{0.92900,0.69400,0.12500}
\definecolor{matlabGreen}{rgb}{0.46600,0.67400,0.18800}
\colorlet{myGray}{white!40!black}
\definecolor{myLightGreen}{HTML}{dae8b5}
\colorlet{color_trapped}{viridis3}
\colorlet{color_leaky}{inferno2}
\colorlet{color_leaky_forward}{inferno2}
\colorlet{color_leaky_backward}{violet_dark}
\colorlet{color_plate}{gray!30!white}
\colorlet{color_fluid}{blue_light!50!white}
\colorlet{color_pml}{inferno5!50!white}
\colorlet{color_glue}{violet_light!70!white}
\colorlet{dark}{blue_dark}%{inferno1}
\colorlet{light}{orange_dark}%{viridis3}
\colorlet{verylight}{green_light}
\pgfplotsset{cycle list name=infvir6}
\pgfplotsset{
    /pgfplots/bar cycle list/.style={/pgfplots/cycle list={
            {viridis3,fill=viridis3!30!white,mark=none,thin},
            {inferno2,fill=inferno2!30!white,mark=none,thin},
            {viridis4,fill=viridis4!30!white,mark=none,thin},
            {inferno1!90,fill=inferno1!30!white,mark=none,thin},
            {viridis5,fill=viridis5!30!white,mark=none,thin},
            {inferno3,fill=inferno3!30!white,mark=none,thin},
}, },
}
\pgfplotsset{
    /pgfplots/semitransparent bar cycle list/.style={/pgfplots/cycle list={
            {inferno2,fill=inferno2!30!white,mark=none,thin},
            {viridis4!40!black,fill=viridis4!80!yellow,mark=none,thin, fill opacity=.4},
}, },
}
\tikzset{schwarzweisspunkt/.style={arrows={-Circle[width=3pt,length=3pt]}, line width=.8pt, line cap=round, preaction={draw=white,shorten >=-1pt,line width=1.5pt, opacity=.8, line cap=round, arrows={-Circle[width=5pt,length=5pt]}}}}
\tikzset{schwarzweisspfeil/.style={ arrows={-Latex[width=3pt,length=3pt]}, line width=.8pt, line cap=round, preaction={draw=white,shorten >=-1.5pt,line width=1.5pt, opacity=.8, fill opacity=.5, line cap=round, arrows={-Latex[width=5pt,length=5pt]}} }}
\tikzset{beschriftung/.style={inner sep=1pt, outer sep=2pt, rounded corners=2pt, fill=white, opacity=.5, text opacity=1}}
\tikzset{cross/.style={cross out, draw=black, inner sep=0pt, outer sep=0pt}}
\tikzset{filled arrow/.style={single arrow, fill=red!50, anchor=base, align=center,text width=1cm}}
\pgfplotsset{
  myPlotStyle/.style={                % pgfplots (Chapter 4.18)
    every axis/.append style={        % axis: options for axis
      axis on top=false,              % plot axis on top (e.g., grid)
      % set layers=axis lines on top,   % axis on top but grid in background 
      width=6cm, height=4.5cm,     % size
      scale only axis=true,          % width and height -> apply only to axis
      % /pgf/number format/1000 sep={}, % do not delimit per thousand
      % /pgf/number format/use comma,   % Dezimaltrennzeichen: Komma statt Punkt
      xmajorgrids, ymajorgrids,       % enable major grid
      legend style={                  % legend
        font={\footnotesize\color{white!15!black}},
        legend pos=north east,
        legend cell align=left,,
        },
      cycle list name=tab10,
      % mark list fill=.!50!white,
      mark size=1pt,                  % size of markers
      % axis background/.style={fill=white},
      %line width=1.5pt,              % global line width
      line join=bevel,                % prevent strange looking connection
      colorbar style={
        at={(1.05, 0)}, anchor=south east,
        ylabel style={font=tiny,font=\color{white!15!black}}, % at={(1.1, 1)}, anchor=north west 
        % ytick distance=1,
        % y tick label style={font=\tiny},
        width=.1cm,
        },  
      },
    % every axis plot post/.append style={ % supports "set layers=" (see above)
    %   thick, 
    %   line join=round,
    % }
    every axis x label/.append style={font=\color{white!15!black}},
    every axis y label/.append style={font=\color{white!15!black}},
    every axis title/.append style={font=\footnotesize}, % title: bold font
    every axis plot/.append style={very thick}, % plot: line thickness
    legend style={draw=white!15!black},
    axis background/.append style={fill=white},% plot: background of axis
    },
    pp_dashed_bw/.style={               % dashed line with white background
      postaction={draw,black,dashed},color=white,solid
    },
    pp_dashed_bw_alt/.style={           % dashed line with white background
      postaction={draw,white,dashed,dash phase=3pt},color=black,dashed
    },
}
\pgfplotsset{myPlotStyle}
\DeclareOldFontCommand{\bf}{\normalfont\bfseries}{\mathbf}
\DeclareOldFontCommand{\rm}{\normalfont\rmseries}{\mathrm}
\DeclareOldFontCommand{\tt}{\normalfont\ttseries}{\mathtt}
\DeclareMathAlphabet{\mathup}{OT1}{\familydefault}{m}{n}
\newcommand*{\mup}[1]{\mathup{#1}} % Abkürzung für \mathup
\newcommand*{\iu}{\mathup{i}\mkern1mu}
\DeclareRobustCommand{\Re}{\operatorname{Re}}
\DeclareRobustCommand{\Im}{\operatorname{Im}}
\DeclareMathOperator{\e}{e}
\DeclareMathOperator{\rank}{rank}
\DeclareMathOperator{\nrank}{nrank}
\newcommand\ten[1]{\mathbf{#1}}
\newcommand\define{\stackrel{\text{def\ \,}}{=\ }}
\newcommand\transp{^{\top}} % \intercal or bold \intercal instead? 
\newcommand\dirvec[1]{\mathbf{e}_{#1}}
\newtheorem{theorem}{Theorem}[section]
\begin{document}
%% the square bracket argument will send term to running head in
%% preprint, or running foot in reprint style.
% ie
%\title[JASA/Sample JASA Article]{Article title should be less than 17 words, no acronyms}

\title{Computing zero-group-velocity points in anisotropic elastic waveguides: Globally and locally convergent methods}

\author{Daniel~A.~Kiefer}
\affiliation{Institut Langevin, ESPCI Paris, Universit\'e PSL, CNRS, 75005 Paris, France}
\email{daniel.kiefer@espci.fr}

\author{Bor~Plestenjak}
\affiliation{Faculty of Mathematics and Physics, University of Ljubljana, SI-1000 Ljubljana, Slovenia}

\author{Hauke~Gravenkamp}
\affiliation{International Centre for Numerical Methods in Engineering (CIMNE), 08034 Barcelona, Spain}

\author{Claire~Prada}
\affiliation{Institut Langevin, ESPCI Paris, Universit\'e PSL, CNRS, 75005 Paris, France}

% may be added after \author{}, ie
% \altaffiliation{Also at: Department1,  University1, City, State ZipCode, Country.}

%% For preprint only,
%  optional, if you want want this message to appear in upper right corner of title page
% \preprint{}
%ie
%\preprint{Author, JASA}        

% optional, if desired:
%\date{\today} 

%% pacs numbers not used

\date{\today} 

\begin{abstract}\noindent
Dispersion curves of elastic waveguides exhibit points where the group velocity vanishes while the wavenumber remains finite. These are the so-called zero-group-velocity (ZGV) points. As the elastodynamic energy at these points remains confined close to the source, they are of practical interest for nondestructive testing and quantitative characterization of structures. These applications rely on the correct prediction of the ZGV points. In this contribution, we first model the ZGV resonances in anisotropic plates based on the appearance of an additional modal solution. The resulting governing equation is interpreted as a two-parameter eigenvalue problem. Subsequently, we present three complementary numerical procedures capable of computing ZGV points in arbitrary nondissipative elastic waveguides in the conventional sense that their axial power flux vanishes. The first method is globally convergent and guarantees to find all ZGV points but can only be used for small problems. The second procedure is a very fast, generally-applicable, Newton-type iteration that is locally convergent and requires initial guesses. The third method combines both kinds of approaches and yields a procedure that is applicable to large problems, does not require initial guesses and is likely to find all ZGV points. The algorithms are implemented in \texttt{GEW ZGV computation} (\dodoi{10.5281/zenodo.7537442}).
\end{abstract}
% \begin{keyword}
% guided waves; zero-group-velocity (ZGV) resonances; semi-analytical methods; waveguides; ultrasound; two-parameter eigenvalue problems
% \end{keyword}

\maketitle

\section{Introduction} 
\label{sec:introduction}

Thin-walled mechanical structures act as elastodynamic waveguides~\cite{auld_acoustic_1990,royer_elastic_2022}. The angular frequency~$\omega$ of a guided wave is related to the wavenumber~$k$ via a dispersion relation $\omega(k)$.
There exist so-called \emph{zero-group-velocity (ZGV)} points $(\omega_*, k_*)$ on the dispersion curves where the group velocity~$c_\mup{g} = \frac{\partial \omega}{\partial k}$ vanishes while the wavenumber~$k_*$ remains finite~\cite{tassoulas_wave_1984,prada_laser-based_2005,prada_local_2008}. These are of special practical interest because the waves do not propagate, and their energy remains close to the source, leading to long-lasting ringing effects. This enables the accurate contactless assessment of structural properties such as the thickness~\cite{holland_air-coupled_2003,ces_thin_2011,baggens_systematic_2015,grunsteidl_inverse_2016}, elastic material parameters~\cite{clorennec_local_2007,ces_characterization_2012,prada_influence_2009,grunsteidl_inverse_2016,grunsteidl_determination_2018,watzl_situ_2022}, bonding state~\cite{mezil_non_2014,mezil_investigation_2015}, properties of a surrounding fluid~\cite{valenza_ii_characterizing_2021} and effective mechanical behavior, e.g., of nanoporous silicon~\cite{thelen_laser-excited_2021}. 

The use and design of the above-mentioned applications usually rely on the theoretical prediction of the ZGV points. It is common practice to extract this information from the complete theoretical dispersion curves. However, this can be a tedious task, in particular, because several digits of precision are usually required. More importantly, inverse methods for determining the sought properties rely on the automatic determination of the ZGV points~\cite{grunsteidl_inverse_2016,clorennec_local_2007,watzl_situ_2022}. For these reasons, it is desirable to develop general and efficient numerical methods to compute these points, which is the aim of the present contribution. 

Although it is not common to explicitly compute the ZGV points, several procedures have already been devised for this end. The most widespread method is based on the implicit analytical dispersion relation, which is of the form $\Omega(\omega, k) = 0$, by additionally requiring the analytically obtained group velocity $\frac{\partial \omega}{\partial k} = -\frac{\partial \Omega}{\partial k} \left(\frac{\partial \Omega}{\partial \omega}\right)^{-1}$ to vanish~\cite{grunsteidl_inverse_2016,mezil_non_2014}. Even for the simple case of an isotropic plate, this leads to relatively cumbersome expressions that require carefully implemented numerical methods to avoid numerical instabilities~\cite{royer_elastic_2022}. Conventionally, a gradient-based iterative solution method would be employed to solve this nonlinear system -- which requires one initial guess for every \emph{expected ZGV point}. The procedure has been extended to imperfectly bonded multi-layered isotropic plates~\cite{mezil_non_2014}. However, although feasible, to the best of our knowledge, it has not been employed for anisotropic plates, let alone for geometrically more complicated structures. 

Another method to obtain dispersion curves consists in performing a numerical discretization of the boundary value problem (BVP) that describes the guided waves and then solving the resulting eigenvalue problem (EVP). Based on this strategy, Kausel~\cite{kausel_number_2012} proposed to start at cut-off frequencies ($k = 0$) associated with backward waves \cite{shuvalov_backward_2008} -- which have negative group velocity for $k > 0$ -- and then follow the branch until the ZGV point is reached. To find all solutions, the method requires that the ZGV points are induced by a backward wave starting at $k = 0$ and that only one ZGV point occurs on each branch. These assumptions are not generally true. Multiple ZGV points on one branch have been observed in homogeneous but anisotropic plates~\cite{hussain_lamb_2012,karous_multiple_2019,hernando_quintanilla_modeling_2015}, layered structures~\cite{maznev_surface_2009,glushkov_multiple_2021} and fluid-filled pipes~\cite{cui_backward_2016}.
Moreover, the mentioned references have also revealed that ZGV points can lie on forward wave branches emerging at $k = 0$. We will explain this behavior in Subsec.~\ref{sub:properties_zgv_resonances}.

In this contribution, we present three complementary computational methods to locate ZGV points in generally anisotropic and possibly transversely inhomogeneous waveguides. In principle, the techniques are applicable to waveguides of generic cross-sections, although the first procedure is constrained by the problem size. The current work focuses on homogeneous anisotropic plates. The methods do not require calculating the full dispersion curves in advance. The computations are entirely based on the discretized EVP, as this is generally more robust, very fast and comparably simple to implement~\cite{Gravenkamp2012,Gravenkamp2013a}. The first method is a direct and \emph{globally convergent} method that does not require initial values at the cost of being computationally very expensive. This is the first computational technique capable of guaranteeing to locate all ZGV points as long as the problem is not too large. The second method is based on an iterative approach and is very fast but \emph{locally convergent}; hence, it requires one initial guess for every expected ZGV point. The third method first uses a regularized direct approach to obtain close approximations to ZGV points and then refines the result with our second, iterative method. This technique can be applied to rather large problems and is likely to find all ZGV points. Moreover, it would also be possible to use the first procedure on a coarse discretization to obtain initial values that are then refined with the second one. In this sense, the techniques complement each other.

In the following, we first discuss the modeling of ZGV resonances and their properties in Sec.~\ref{sec:zgv_resonances_in_plates}. This includes a short discussion of the discretization of the continuous problem, as this will be the starting point for the upcoming numerical procedures. The direct solution of the discrete ZGV problem is presented in Sec.~\ref{sec:direct_solution_of_the_two_parameter_eigenvalue_problem}. In contrast to this, Sec.~\ref{sec:fast_locally_convergent_newton_iteration} introduces the mentioned iterative solution procedure. The third and last method is given in Sec.~\ref{sec:method_fixed_rel_distance}, which combines the strength of a globally convergent method with the speed of the iterative one. Finally, a conclusion is given in Sec.~\ref{sec:conclusions_and_outlook}.

\section{ZGV resonances in plates} 
\label{sec:zgv_resonances_in_plates}
The waveguide is a linearly elastic infinite plate as depicted in Fig.~\ref{fig:plate}. It is characterized by its thickness~$h$, mass density~$\rho$ and 4th order stiffness tensor~$\ten{c}$, which we assume homogeneous in the following.

\begin{figure}[tb]
  \centering
  % \tikzset{external/export next=false}
  % \tikzsetnextfilename{Figure1}
  % \input{tikz/plate_geometry}
  \includegraphics{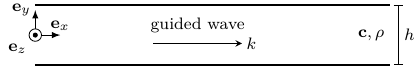}
  \caption{Infinite, elastic, anisotropic plate of thickness~$h$, density~$\rho$, stiffness tensor~$\ten{c}$.}
  \label{fig:plate}
\end{figure}

Displacements~$\ten{\bar{u}}$ in the plate in the absence of external loads are governed by~\cite{langenberg_ultrasonic_2012,kiefer_elastodynamic_2022}
\begin{subequations}
\label{eq:BVP_general}%
\begin{align}
  \nabla \cdot \ten{c} : \nabla \ten{\bar{u}} - \rho \partial_t^2\ten{\bar{u}}  &= \ten{0} \,, \text{ and }
  \label{eq:equation_motion_time} \\
  \dirvec{y} \cdot \ten{c} : \nabla \ten{\bar{u}} &= \ten{0} \,\hphantom{,} \text{ at } y = \pm h/2 \,,
  \label{eq:BC_free}
\end{align}
\end{subequations}
which represent the equations of motion and the traction-free boundary conditions (BC), respectively. Note that the stress tensor $\ten{T} = \ten{c} : \nabla \bar{\ten{u}}$ has already been eliminated therein. $\nabla = \dirvec{x} \partial_x + \dirvec{y} \partial_y + \dirvec{z} \partial_z$ is the Nabla operator, $\dirvec{i}, i \in \{x, y, z\}$, are the unit-directional vectors and the $\partial_i$ denote partial differentiation with respect to the indicated variable. Each ``$\cdot$''-symbol denotes a contraction (scalar product) between adjacent tensor dimensions. Accordingly, the ``:''-symbol implies two consecutive contractions. Note also that $\dirvec{y}$ is the unit-directional vector normal to the plate's surface that yields the relevant traction $\dirvec{y} \cdot \ten{T}$ for the BC. For details on symbolic tensor notation, refer, e.g., to Ref.~\citen{langenberg_ultrasonic_2012}.

\subsection{Guided waves}
\label{sub:guided_waves}
We are interested in time-harmonic, plane, guided waves which cause displacements of the form
\begin{equation}
  \ten{\bar{u}}(x,y,t) = \ten{u}(y, k, \omega) \e^{\iu k x - \iu \omega t} \,,
  \label{eq:plane_wave_ansatz}
\end{equation}
where $k$ denotes the wavenumber along the axial coordinate $x$, $\omega$ stands for the angular frequency in time $t$, and $\iu = \sqrt{-1}$. 
While the dependence on $t$ and $x$ has been resolved analytically by the ansatz (\ref{eq:plane_wave_ansatz}), the dependence on the $y$-coordinate remains to be determined. This is achieved by requiring (\ref{eq:plane_wave_ansatz}) to satisfy (\ref{eq:equation_motion_time}) and (\ref{eq:BC_free}), i.e.,
\begin{subequations}
\label{eq:waveguide_problem}%
\begin{align}
  \label{eq:waveguide_problem_motion}
  \underbrace{ \left[ (\iu k)^2 \ten{c}_{xx} + \iu k (\ten{c}_{xy} + \ten{c}_{yx}) \partial_y + \ten{c}_{yy} \partial_y^2 + \omega^2 \rho \ten{I} \right] }_{\ten{W}} \cdot \ten{u} &= \ten{0} \,, \\
  \label{eq:waveguide_problem_BC}
  \underbrace{ \left[ \iu k \ten{c}_{yx} + \ten{c}_{yy}\partial_y \right] }_{\ten{B}} \cdot \ten{u} &= \ten{0} 
\end{align}
\end{subequations}
which uses the 2nd-order tensors $\ten{c}_{ij} = \dirvec{i} \cdot \ten{c} \cdot \dirvec{j}$. For example, the components of $\ten{c}_{xy}$ are $c_{xkly}$ with $k, l \in \{x, y, z\}$. For the sake of completeness, the explicit derivation of the equations is given in Appendix~\ref{sec:derivation_of_the_waveguide_problem}.
The above boundary value problem in the coordinate $y$ will be referred to as the \emph{waveguide problem} and is considered to depend on the two parameters $\omega$ and~$k$. For future reference, we have identified therein the wave operator $\ten{W}$ and the boundary operator $\ten{B}$. One possible computer implementation of (\ref{eq:waveguide_problem}) based on the spectral collocation method is given in Ref.~\citen{kiefer_gew_2022}.

Solutions of the form (\ref{eq:plane_wave_ansatz}) are denoted as \emph{guided waves} and are well studied in the literature~\cite{auld_acoustic_1990,royer_elastic_2022,kiefer_elastodynamic_2022}. Depending on $\ten{c}$, the displacement components $u_x$ and $u_y$ might decouple from $u_z$. Waves polarized purely in the former plane are denoted as \emph{Lamb waves}, while those in the latter polarization are termed \emph{shear-horizontal (SH) waves}. Note that the waveguide problem formulated in (\ref{eq:waveguide_problem}) is equally valid for all of these polarizations when considering the corresponding 3-, 2- or 1-component tensors, respectively. For example, Lamb waves -- being polarized in $x$-$y$ -- are obtained by removing the $z$-component of all tensors. In any case, the solutions $(\omega, k)$ form curves $\omega(k)$ denoted as \emph{dispersion curves}. An example for Lamb waves in an isotropic steel plate (Lamé parameters\footnote{The isotropic 4th-order stiffness tensor is $\ten{c} = \lambda \ten{I} \ten{I} + \mu (\ten{I} \ten{I}^{1342} + \ten{I} \ten{I}^{1324})$. The super-indices describe a permutation applied to the tensor $\ten{I} \ten{I}$, i.e., the original dimensions $1234$ are re-ordered as described.}
$\lambda = \SI{115.6}{\giga\pascal}$, $\mu = \SI{79}{\giga\pascal}$, mass density $\rho = \SI{7900}{\kilo\gram\per\meter\cubed}$) is shown in Fig.~\ref{fig:dispersion_steel}a. Another important physical property of the waves is their group velocity~$c_\mup{g} \define \frac{\partial \omega}{\partial k}$, which is depicted in Fig.~\ref{fig:dispersion_steel}b for the same plate. Physically, $c_\mup{g}(\omega)$ describes the propagation speed of a pulse whose spectrum is centered at $\omega$.

\begin{figure*}[tb]
  \centering
  % \tikzsetnextfilename{Figure2a}%
  % \input{tikz/dispersion_steel}%
  \includegraphics{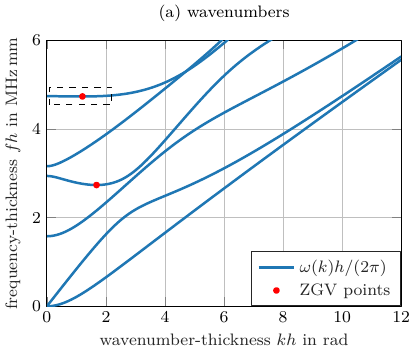}%
  \hspace{1em}%\\[1em]
  % \tikzsetnextfilename{Figure2b}%
  % \input{tikz/dispersion_steel_cg}%
  \includegraphics{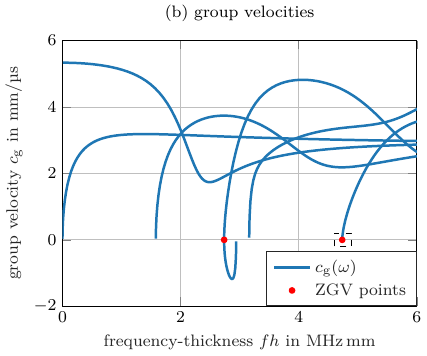}%
  \hspace{1em}%
  % \tikzsetnextfilename{Figure2c}%
  % \input{tikz/dispersion_steel_excerpts}%
  \includegraphics{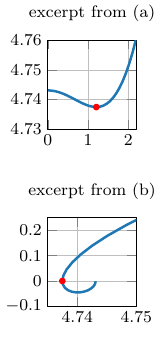}%
  \caption{Lamb wave dispersion curves of an isotropic steel plate showing the frequency-dependent (a) wavenumbers~$k$ and (b) group velocities~$c_\mup{g}$. Excerpts from the dashed regions are also displayed for clarification. Only the real spectrum is shown, for which $c_\mup{g}$ is well defined. The ZGV points are marked.}
  \label{fig:dispersion_steel}
\end{figure*}

\subsection{Properties of ZGV resonances}
\label{sub:properties_zgv_resonances}
We are interested in the ZGV points $(\omega_*, k_*)$ where the group velocity~$c_\mup{g}$ vanishes but $k_*$ remains finite. Two such points appear in Fig.~\ref{fig:dispersion_steel} and are marked correspondingly. In addition, the group velocity usually also vanishes at the cutoff frequencies, i.e., the thickness resonances where $k = 0$, except when multiple cut-off frequencies coincide~\cite{kausel_number_2012}. This means that the points at $k = 0$ are usually local extrema of a dispersion curve, and so are the ZGV points. Note that the latter can also be interpreted as local resonances~\cite{prada_laser-based_2005,prada_local_2008}. Both the thickness resonances and the ZGV resonances lead to long-lasting local vibrations that manifest as peaks in the spectrum of the particle velocity measured at the excitation point.

The parity of the number of ZGV points encountered on a single branch is directly related to whether this branch emerges at $k = 0$ as a forward or a backward wave. When an even (odd) number of ZGV points lie on one branch, they will be located on a forward (backward) wave branch emerging at $k = 0$ (inspect Fig.~\ref{fig:dispersion_steelAustenitic} as an example). This is due to the fact that for large wavenumbers ($k \to \infty$), the group velocity is positive, i.e., $c_\mup{g} = \frac{\partial \omega}{\partial k} > 0$. Indeed, for sufficiently small wavelengths $\frac{2 \pi}{k}$, the system resembles an infinite domain without dispersion. In other words, above a certain value of $k$, the curves $\omega(k)$ increase monotonically. Accordingly, the last ZGV point on a branch (counted in increasing $k$) always corresponds to a minimum in $\omega(k)$. All other ZGV points are alternately a maximum and a minimum.

At points where the group velocity vanishes, the plate admits an additional solution. This has been discussed by Mindlin~\cite{mindlin_vibrations_1958} for the case of thickness resonances. His observation was extended to ZGV points in isotropic and homogeneous plates by Tassoulas and Akylas~\cite{tassoulas_wave_1984}, who denoted the additional solution as \emph{exceptional mode}. Later, Kausel~\cite{kausel_number_2012} showed the existence of these additional solutions in layered isotropic plates with various boundary conditions. The appearance of the exceptional mode implies that these points are double eigenvalue points, i.e., two solutions with the same frequency and wavenumber exist. This is in agreement with the fact that the ZGV points are branching points where two connected real-valued branches for $\omega > \omega_*$ transition into two complex-valued branches for $\omega < \omega_*$ (not shown in Fig.~\ref{fig:dispersion_steel})~\cite{auld_acoustic_1990}.

A further property of ZGV points is that these waves do not propagate energy. For lossless waveguides, the group velocity is equivalent to the energy velocity, defined as the ratio of the wave's power flux vector and its total stored energy~\cite{auld_acoustic_1990,langenberg_ultrasonic_2012}. Hence, their power flux vanishes. In anisotropic plates, the group velocity vector $\ten{c}_\mup{g} = \partial \omega/\partial k\, \dirvec{x} + c_{\mup{g}z} \dirvec{z}$ does not need to be collinear to the wave vector $\ten{k} = k \dirvec{x}$, i.e., the component $c_{\mup{g}z}$ might be non-zero~\cite{shuvalov_backward_2008,auld_acoustic_1990,langenberg_ultrasonic_2012}. However, if the material is invariant to reflection along $\dirvec{z}$, then $c_{\mup{g}z}$ needs to vanish, and $\ten{c}_\mup{g}$ will be aligned with $\ten{k}$, i.e., it is sufficient to consider the $x$-component of $\ten{c}_\mup{g}$. Note that this is the case when Lamb and SH polarized waves decouple (see details in Appendix~\ref{sec:collinearity_of_power_flux_and_wave_vector}), as will be the case for all examples presented in this contribution. Hence, we find ``true ZGV points'' in the sense that the power flux vanishes in all directions. Note that all our methods still work when the polarizations do not decouple. That is to say, points with vanishing power flux in direction $\dirvec{x}$ are correctly found.

Although non-essential to our developments, in the following, we also discuss the normalizability of ZGV resonances for the sake of completeness. As guided waves are usually normalized to unit power flux, this implies that ZGV waves are not normalizable in the usual sense. While a renormalization has successfully been performed for cut-off frequencies~\cite{pagneux_lamb_2006}, this has not been achieved for the ZGV points. On the one hand, this property leads to computational difficulties, e.g., it is not possible to perform a perturbation-based sensitivity analysis~\cite{auld_acoustic_1990,kiefer_transit_2022,kiefer_elastodynamic_2022}. On the other hand, it is precisely this property that makes the waves interesting for practical measurements because the wave's energy remains close to the source.

\subsection{Modeling ZGV resonances}
\label{sub:modeling_zgv_resonances}

While the dispersion curves can be obtained by prescribing values either for the frequencies or for the wavenumbers and then computing the other one, this is not possible for the ZGV points. The reason is that they are isolated points on the dispersion curves. This means that we need to determine both their angular frequency~$\omega_*$ and their wavenumber $k_*$ simultaneously. To this end, an additional equation is required that complements (\ref{eq:waveguide_problem}). 

As such a condition, we demand the appearance of an \emph{exceptional mode}, as discussed in the previous subsection. Adapting slightly from Ref.~\citen{tassoulas_wave_1984}, this solution has the form
\begin{equation}
  \ten{\bar{u}}(x,y,t) = \left[ \ten{u}'(y, k, \omega) + x\, \ten{u}(y, k, \omega) \right] \e^{\iu k x - \iu \omega t} \,,
  \label{eq:exceptional_mode_ansatz}
\end{equation}
where $\ten{u}(y, k, \omega)$ must satisfy the waveguide problem and the dash ($'$) denotes the derivative with respect to $\iu k$. Although the imaginary unit in the derivative is not required, it does lead to simpler expressions.

The equations governing exceptional modes are known for isotropic plates~\cite{tassoulas_wave_1984}. We derive the equations for the anisotropic case by inserting (\ref{eq:exceptional_mode_ansatz}) into (\ref{eq:BVP_general}). While the details can be found in Appendix~\ref{sec:derivation_of_the_exceptional_mode_equation}, the procedure is outlined in the following. Rearranging yields
\begin{equation}
  x \ten{W} \cdot \ten{u} + \ten{W} \cdot \ten{u}' + \ten{W}' \cdot \ten{u} = \ten{0} \,,
\end{equation}
where $\ten{W}(k, \omega)$ is the waveguide operator given in (\ref{eq:waveguide_problem_motion}) and
\begin{equation}\label{eq:Wdash_operator}
    \ten{W}'(k) = 2 \iu k \ten{c}_{xx} + (\ten{c}_{xy} + \ten{c}_{yx}) \partial_y \,.
\end{equation}
As $\ten{u}$ was required to be a waveguide solution, it satisfies $\ten{W}\cdot \ten{u} = \ten{0}$, and we may restate the problem as 
\begin{subequations}
\label{eq:exceptional_mode_eq}%
  \begin{align}
    \label{eq:exceptional_mode_motion}
    &\text{PDE:}\quad
    &\begin{bmatrix}
      \ten{W}  &   \ten{0} \\
      \ten{W}' &   \ten{W}
    \end{bmatrix} \cdot \begin{bmatrix}
      \ten{u} \\ \ten{u}'
    \end{bmatrix} &= \begin{bmatrix}
      \ten{0} \\ \ten{0}
    \end{bmatrix} \,, \\
    \label{eq:exceptional_mode_BC}
    &\text{BC:}\quad
    &\begin{bmatrix}
      \ten{B}   &  \ten{0} \\ \ten{B}'   &  \ten{B}
    \end{bmatrix} \cdot \begin{bmatrix}
      \ten{u} \\ \ten{u}'
    \end{bmatrix} &= \begin{bmatrix}
        \ten{0} \\ \ten{0}
      \end{bmatrix} \,,
  \end{align}
\end{subequations}
where the boundary condition (\ref{eq:exceptional_mode_BC}) has been obtained in a similar fashion. Therein, $\ten{B}(k)$ is given in (\ref{eq:waveguide_problem_BC}) and $\ten{B}' = \ten{c}_{yx}$. 

It is important to note that the second equation in the systems (\ref{eq:exceptional_mode_motion}) and (\ref{eq:exceptional_mode_BC}) could have been obtained by differentiating the waveguide problem (\ref{eq:waveguide_problem}) with respect to $\iu k$ and setting $c_\mup{g} = \partial \omega / \partial k = 0$. This demonstrates that a wave field of the form (\ref{eq:exceptional_mode_ansatz}) emerges precisely at the points where the group velocity vanishes.

We seek eigensolutions where (\ref{eq:waveguide_problem}) and (\ref{eq:exceptional_mode_eq}) are simultaneously satisfied. The combination of these two equations may be denoted as the \emph{ZGV problem}. We interpret it as a differential \emph{two-parameter eigenvalue problem (two-parameter EVP)}~\cite{atkinson_multiparameter_2010,plestenjak_spectral_2015,muhic_singular_2009,hochstenbach_solving_2019} that describes the sought isolated points on the dispersion curves. Note that although (\ref{eq:waveguide_problem}) is included in (\ref{eq:exceptional_mode_eq}), we still require a system of both equations as (\ref{eq:waveguide_problem}) guarantees that $\ten{u}\ne 0$. In the following, we introduce a discrete approximation, and subsequently, we present a numerical method to solve the resulting \emph{algebraic two-parameter EVP}.

\subsection{Discretization} % (fold)
\label{sub:discretization}
In order to actually solve for the ZGV points, we perform a numerical discretization of the previously discussed boundary-value problems. This converts the \emph{waveguide problem} (\ref{eq:waveguide_problem}) into an algebraic EVP in terms of the vector%
\footnote{Note that we switch from describing physical tensor fields to a more abstract finite-dimensional vector space. Correspondingly, we also switch from tensor notation (bold) to matrix/vector notation (uppercase/lowercase, respectively).}
of unknowns $u$ that might be written as
\begin{equation}
  \label{eq:waveguide_problem_discrete}
  W(k, \omega) u = 0 \,,
\end{equation}
where the parameterized $n \times n$-matrix $W$ defines the discrete waveguide operator and is given by
\begin{equation}\label{eq:waveguide_operator_discrete}
    W(k, \omega) = (\iu k)^2 L_2 + \iu k L_1 + L_0 + \omega^2 M \,,
\end{equation}
with real matrices $L_i$, $M$. Therein, the boundary conditions are already accounted for by the matrices. Similarly, the discretization of (\ref{eq:Wdash_operator}) leads to
\begin{equation}\label{eq:Wdash_operator_discrete}
    W'(k) = 2 \iu k L_2 + L_1 \,.
\end{equation}
The matrices $W$ and $W'$ also define the discrete analogue of the \emph{exceptional mode equation} (\ref{eq:exceptional_mode_eq}), namely, 
\begin{equation}\label{eq:exceptional_mode_eq_discrete}
    \begin{bmatrix}
        W   &   0 \\
        W'  &   W
    \end{bmatrix} \begin{bmatrix}
        u \\ u'
    \end{bmatrix} = \begin{bmatrix}
        0 \\ 0
    \end{bmatrix} \,,
\end{equation}
in terms of the unknown eigenvector $[u\transp, {u'}\transp]\transp$. The boundary conditions are, again, already accounted for by the matrices.

Various numerical discretization methods can be used to obtain the above linear systems. Two popular methods are finite elements~\cite{gravenkamp_efficient_2018,Gravenkamp2013a,Gravenkamp2012,finnveden_evaluation_2004} and spectral collocation~\cite{hernando_quintanilla_modeling_2015,kiefer_elastodynamic_2022,kiefer_calculating_2019,kiefer_gew_2022}. The \emph{spectral element method}, i.e., high-order finite elements \cite{fichtner_full_2010,gravenkamp_efficient_2018,gravenkamp_high-order_2021}, is used in this contribution as it produces matrices of small size $n \times n$ and leads to $L_2$ and $M$ being nonsingular. These properties are highly advantageous -- in some cases even necessary -- in order to successfully use the computational methods presented in the following. The small matrix size is required for the direct solution method because it blows up the matrices' dimensions to  $4 n^2\times 4 n^2$. A further advantage of the spectral element method is that it preserves the symmetry of the continuous operators, resulting in Hermitian matrices $W$ and $\iu W'$ when $k$ is real-valued, i.e., the matrices $L_i$ are alternately symmetric/anti-symmetric and $M$ is symmetric positive definite. As a consequence, solving for the eigenpair $(\omega^2, u)$ of a guided wave at a given real-valued $k$ is a Hermitian problem and the complex conjugate and transpose $u^\mup{H}$ is known to be the left eigenvector corresponding to $u$, i.e., $u^\mup{H} W = 0$. This avoids the explicit computation of the left eigenvectors, which will be exploited when computing group velocities and also to design an extremely fast locally convergent method in Sec.~\ref{sec:fast_locally_convergent_newton_iteration}.

For future reference, we also discuss how to compute the group velocity~$c_\mup{g}$. It is well known that it can directly be computed from the discrete system (\ref{eq:waveguide_problem_discrete})~\cite{finnveden_evaluation_2004,Gravenkamp2013a}. To this end, we differentiate the equation with respect to $\iu k$ and obtain 
\begin{equation}
  \label{eq:waveguide_problem_differentiated}
  [ 2 \iu k L_2 + L_1 + 2 \omega \omega' M ] u + W u' = 0 \,.
\end{equation}
Exploiting that $u^\mup{H}$ is the left eigenvector as discussed above, the unknown $u'$ can be eliminated by multiplying the expression from the left by $u^\mup{H}$. After re-arranging, this yields the group velocity ($x$-component) as
\begin{equation}
  \label{eq:group_velocity_discrete}
  c_\mup{g} = \iu \omega' = -\frac{u^\mup{H} \iu (2 \iu k L_2 + L_1) u}{2\omega\, u^\mup{H} M u} \,.
\end{equation}
Accordingly, at ZGV points, the condition 
\begin{equation}
  \label{eq:ZGV_condition_cg0}
  u^\mup{H} \iu W' u = u^\mup{H} \iu (2 \iu k L_2 + L_1) u = 0 \,
\end{equation}
holds.

\section{Direct solution of the two-parameter eigenvalue problem} 
\label{sec:direct_solution_of_the_two_parameter_eigenvalue_problem}
As discussed in Sec.~\ref{sub:modeling_zgv_resonances}, ZGV resonances are modeled by two coupled EVPs parametrized in $\omega$ and $k$ that need to be satisfied simultaneously, namely the guided wave problem and the exceptional mode problem. The discrete form of this so-called two-parameter EVP was given in Sec.~\ref{sub:discretization}. Here, we present a direct numerical solution procedure.

Explicitly writing out the exceptional mode equation (\ref{eq:exceptional_mode_eq_discrete}) in terms of the matrices $L_i$, $M$ and grouping according to the dependence on $\iu k$ and $\omega^2$ yields
\begin{equation}
  \label{eq:exceptional_mode_discrete_2_block}
  \left[ (\iu k)^2 \widetilde L_2 + \iu k \widetilde L_1 + \widetilde L_0 + \omega^2 \widetilde M \right] \widetilde u = 0 \,,
\end{equation}
with $2n\times 2n$ matrices and vectors
\[
  \widetilde L_2 = \begin{bmatrix} L_2 & 0 \\ 0 & L_2 \end{bmatrix},\
  \widetilde L_1 = \begin{bmatrix} L_1 & 0 \\ 2L_2 & L_1 \end{bmatrix},\
  \widetilde L_0 = \begin{bmatrix} L_0 & 0 \\ L_1 & L_0 \end{bmatrix}, 
\]
\[
\widetilde M = \begin{bmatrix} M & 0 \\ 0 & M \end{bmatrix},\ 
 \widetilde u = \begin{bmatrix} u \\ u'  \end{bmatrix} \,.
\]
At first sight, demanding the appearance of an exceptional mode by requiring (\ref{eq:exceptional_mode_discrete_2_block}) to be satisfied seems more complicated than the ZGV condition (\ref{eq:ZGV_condition_cg0}). However, while (\ref{eq:ZGV_condition_cg0}) is nonlinear in $u$, (\ref{eq:exceptional_mode_discrete_2_block}) is linear in the new unknown $\widetilde u$. This advantage is exploited in the following to design a direct, globally convergent computational method to locate the ZGV points.

Overall, equations \eqref{eq:waveguide_problem_discrete} and \eqref{eq:exceptional_mode_discrete_2_block} form the two-parameter EVP describing ZGV resonances, and we note that it is quadratic in $\iu k$. 
Following Ref.~\citen{HMP_LinQ2MEP}
we introduce $\lambda=\iu k$, $\mu=\omega^2$, $\eta=(\iu k)^2$ and write \eqref{eq:waveguide_problem_discrete} and \eqref{eq:exceptional_mode_discrete_2_block} as a linear three-parameter EVP (see also Appendix~\ref{sec:mep})
\begin{subequations}
\label{eq:3EP_alg1}
\begin{align}
    (\eta L_2 + \lambda L_1 + L_0 + \mu M)u &=0\label{eq:3ep1}\\
    (\eta \widetilde L_2 + \lambda \widetilde L_1 + \widetilde L_0 + \mu \widetilde M)v &=0\label{eq:3ep2}\\
    (\eta C_2 + \lambda C_1 + C_0 \phantom{+ \mu C_2}) w&=0,\label{eq:3ep3}
\end{align}
\end{subequations}
where
\[
  C_2 = \begin{bmatrix} 1 & 0 \\ 0& 0 \end{bmatrix},\ 
  C_1 = \begin{bmatrix} 0 & -1 \\ -1 & 0 \end{bmatrix},\
  C_0 = \begin{bmatrix} 0 & 0 \\ 0 & 1 \end{bmatrix}.
\]
Note that equation \eqref{eq:3ep3} incorporates the relation between $\lambda$ and $\eta$ since $\det(\eta C_2 + \lambda C_1 + C_0)=\eta-\lambda^2$. We remark that it would also have been possible to linearize via companion linearization~\cite{TisseurMeerbergen}, which would have doubled the problem size. In contrast to this, the above approach only increases the problem size by two at the expense of introducing an additional parameter.

The three-parameter EVP is related to a system of conventional generalized eigenvalue problems (GEPs) that decouple in the eigenvalues but remain coupled through their common eigenvector~\cite{plestenjak_spectral_2015}. These GEPs are 
\begin{equation}\label{eq:delta_sistem}
    \Delta_1 z = \lambda \Delta_0 z,\ \
    \Delta_M z = \mu \Delta_0 z,\  \
    \Delta_2 z = \eta \Delta_0 z,
\end{equation}
where the new eigenvector $z$ is given by the Kronecker product $z=u\otimes v\otimes w$ and the $4n^2\times 4n^2$ matrices
\begin{equation}\label{eq:delta01}
    \Delta_0 = \left|\begin{matrix}
       L_2 & L_1 & M \cr
       \widetilde L_2 & \widetilde L_1 & \widetilde M \cr
       C_2 & C_1 & 0 
    \end{matrix}\right|_\otimes,\
    \Delta_1= (-1)\left|\begin{matrix}
       L_2 & L_0 & M\cr
       \widetilde L_2 & \widetilde L_0 & \widetilde M \cr
       C_2 & C_0 & 0
    \end{matrix}\right|_\otimes    
\end{equation}
are the so-called operator determinants computed using the Kronecker product. %
\footnote{{The Kronecker product $A\otimes B$ of the $m \times n$-matrix $A = [A_{ij}]$ with the $p \times q$-matrix $B$ yields the block matrix $[A_{ij} B]$ of size $m p \times n q$. For three
matrices, $A\otimes B\otimes C=(A\otimes B)\otimes C = A\otimes (B\otimes C)$. Vectors are treated like matrices with one column. The operator determinants are computed analogously to determinants of matrices applying the Kronecker product instead of multiplication of scalar matrix entries. For example, $\Delta_0 =  L_1 \otimes \widetilde M \otimes C_2 + M \otimes \widetilde L_2 \otimes C_1 - M \otimes \widetilde L_1 \otimes C_2 - L_2 \otimes \widetilde M \otimes C_1$.}} 
$\Delta_2,\Delta_M$ are defined in a similar way but are not needed in the following. 
The three-parameter EVP \eqref{eq:3EP_alg1} is \emph{singular} which
means that all GEPs in \eqref{eq:delta_sistem} are \emph{singular}, i.e., $\det(\Delta_1-\lambda \Delta_0)\equiv 0$, but the problem has a finite number of eigenvalues. 

In Theorem \ref{thm:main}, we show that if $(\omega_*, k_*)$ is a ZGV point, then $\lambda_*=\iu k_*$ is an eigenvalue of the first GEP in (\ref{eq:delta_sistem}). 
We can thus compute the  wavenumbers $k_*$ at the ZGV points  by solving the GEP $\Delta_1 z = \lambda \Delta_0 z$ for $k = -\iu\lambda$. These values are subsequently substituted into (\ref{eq:waveguide_problem_discrete}) to retrieve a set of corresponding angular frequencies $\omega$, which, in addition, require to satisfy Eq.~\eqref{eq:ZGV_condition_cg0} for the solution to represent a ZGV point. To find the finite eigenvalues of the singular GEP, we apply the rank projection algorithm described in Ref.~\citen{HMP22_arxiv} and implemented in \texttt{MultiParEig}~\cite{multipareig_2023}. In addition to ZGV points, the three-parameter EVP has additional eigenvalues. To extract ZGV points, we select solutions where $\lambda = \iu k$ is strictly imaginary, $\mu = \omega^2$ is real, and \eqref{eq:ZGV_condition_cg0} holds. The algorithm is the following.

{\small%
\noindent\vrule height 0pt depth 0.5pt width \linewidth \\*
\textbf{Algorithm~1: direct method for ZGV points} \\
\textbf{Input:} $n\times n$ matrices $L_2, L_1, L_0, M$ \\
\textbf{Output:} ZGV points $(\omega_*,k_*)$\\[-1.2ex]
\vrule height 0pt depth 0.3pt width \linewidth \\*[2pt]
\begin{tabular}{ll}
 {\footnotesize 1:} & build matrices $\Delta_0$ and $\Delta_1$ in \eqref{eq:delta01}\cr
 {\footnotesize 2:} & solve the singular GEP $\Delta_1 z = \lambda \Delta_0 z$\cr
 {\footnotesize 3:} & for each imaginary $\iu k_j=\lambda_j$, $j=1,\ldots,r$\cr
 {\footnotesize 4:} & \hbox{}\quad solve $\left[ (\iu k_j)^2 L_2 + (\iu k_j) L_1 + L_0 + \omega^2 M \right] u = 0$\cr
 {\footnotesize 5:} & \hbox{}\quad for each $\omega_\ell$, $\ell=1,\ldots,n $\cr
 {\footnotesize 6:} & \hbox{}\quad\quad return $(\omega_\ell,k_j)$ if \eqref{eq:ZGV_condition_cg0} holds
\end{tabular}\\*
\vrule height 0pt depth 0.5pt width \linewidth \\
}%end algorithm

\begin{figure*}[bt]
  \centering
  % \tikzset{external/export next=false}
  % \tikzsetnextfilename{Figure3a}
  % \input{tikz/dispersion_steelAustenitic}%
  \includegraphics{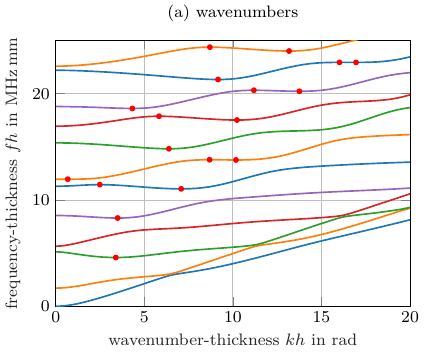}%
  \hspace{2em}%\\[1em]
  % \tikzsetnextfilename{Figure3b}
  % \input{tikz/dispersion_steelAustenitic_cg}
  \includegraphics{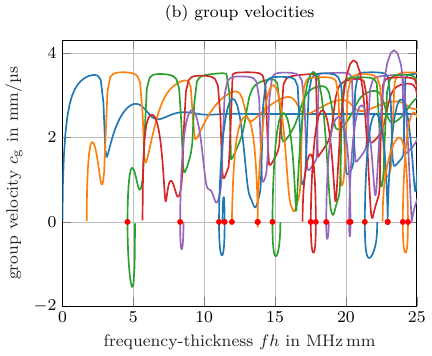}
  \caption{Antisymmetric Lamb waves in an austenitic steel plate: (a) wavenumbers~$k$, (b) group velocities~$c_\mup{g}$.}
  \label{fig:dispersion_steelAustenitic}
\end{figure*}

\noindent
To demonstrate the reliability of the developed method, we compute the ZGV points of antisymmetric Lamb waves in an austenitic steel plate. The orthotropic stiffness tensor is given in Appendix~\ref{sec:material_parameters}. Figure~\ref{fig:dispersion_steelAustenitic} shows the resulting wavenumber and group velocity dispersion curves. All \num{18} ZGV points were obtained and are marked therein. The used matrices $L_i, M$ were of size $\num{39}\times \num{39}$ (as will be for all other computations involving this example), leading to matrices $\Delta_i$ of size $\num{6084}\times \num{6084}$. The computational time was \SI{381}{\second} (Intel Core i9, 16 GB RAM, as for all upcoming numerical experiments). Note that the method is able to reliably locate all ZGV points, including double and triple ZGV points on a single branch. It is remarkable that no initial values are required in doing so. This is a strong advantage over other computational techniques.

Nonetheless, its high computational cost is an important drawback of the presented direct method. While the original waveguide problem is of size $n \times n$, the ZGV calculation requires to \emph{additionally} solve a singular GEP of size $4n^2\times 4n^2$. This becomes unfeasible very quickly as all current numerical solvers for singular GEPs are direct methods in the sense that they compute all eigenvalues of a GEP at once. As a result, in practice, we can only apply the above method to single-layered plates. For this reason,  we present in the following a rapidly converging iterative method formulated in a very general way and applicable to large problems.

\section{Fast locally convergent Newton-type iteration}
\label{sec:fast_locally_convergent_newton_iteration}
A different approach is taken in the following. 
In terms of the $n + 2$ unknowns 
\[
  p = \begin{bmatrix}
    u \\ k \\ \omega^2 
  \end{bmatrix} \,,
\]
we can define a $(n+2) \times (n+2)$ nonlinear system of the form $F(p) = 0$ that describes the ZGV points. We write out $F(p)$ using (\ref{eq:waveguide_problem_discrete}) and (\ref{eq:ZGV_condition_cg0}), which yields
\begin{subequations}
\label{eq:ZGV_system_nonlinear}
\begin{align}
   W u = \left[ k^2 (-L_2) + k (\iu L_1) + L_0 + \omega^2 M \right] u &= 0 \,, \label{eq:Fp_a}\\
   u^\mup{H} \iu W' u = u^\mup{H} \left[ k (-2 L_2) + \iu L_1 \right] u &= 0 \,, \label{eq:Fp_b}\\
   u^\mup{H} u - 1 &= 0 \label{eq:Fp_c}\,.
\end{align}
\end{subequations}
This represents a so-called two-dimensional EVP~\cite{lu_newton-type_2022}. Contrary to (\ref{eq:3EP_alg1}), it consists only of one EVP with an additional scalar nonlinear constraint. The goal is to find a root $p_*$ of (\ref{eq:ZGV_system_nonlinear}) given an initial guess $p_0$. To solve for $p_*$, we cannot apply the Newton method directly since $u$ is, in general, a complex vector and \eqref{eq:Fp_b} and 
\eqref{eq:Fp_c} are not complex differentiable in $u$ due to the presence of conjugate values.

To overcome the problem of complex differentiability, we employ a Newton-type iteration from Ref.~\citen{lu_newton-type_2022} derived for a very 
similar problem.
The idea is to use 
\begin{equation}
  \label{eq:jacobian}
  J(p) = \begin{bmatrix}
    W   &   \iu W' u   &    M u \\
    2 u^\mup{H} \iu W'   &   -u^\mup{H} 2 L_2 u    &   0 \\
    2 u^\mup{H}     &   0    &   0
  \end{bmatrix}
\end{equation}
as a natural complex extension of the Jacobian with respect to $p$ of (\ref{eq:ZGV_system_nonlinear}).
This exploits the fact that both $W$ and $\iu W'$ are Hermitian.
As proposed in Ref.~\citen{lu_newton-type_2022}, we obtain the update $\Delta p_j$ for $p_{j+1} = p_{j} + \Delta p_j$ as
\begin{equation}\label{eq:updated_newton}
  \Delta p_j = -\underbrace{J(p_j)^{-1}F(p_j)}_{q_j} +\iu \beta_j \underbrace{J(p_j)^{-1}e_{n+1}}_{s_j}\,,
\end{equation}
where $e_{n+1}$ is the canonical unit vector, i.e., $e_{n+1}$ is an ${(n+2)}$-dimensional vector whose $(n+1)^\text{th}$ component equals 1, and all other components are equal to~0. Furthermore, we take $\beta_j=\Im(q_{j,n+1})/\Re(s_{j,n+1})$. The correction $\beta_j$ ensures that $k_{j+1}$ remains real (we assume that $k_j$ and $\omega^2_j$ are real).

If we do not have an initial approximation for $u$, usually a good choice is to take the right singular vector corresponding to the smallest singular value of $W(k_0, \omega_0)$. We observe that in practice, this choice of $u_0$ leads to almost zero $\beta_j$. While in this case, the $\beta$-correction is not necessary, it is important for the convergence of the algorithm with other choices of $u_0$.

One way to find all ZGV points of Fig.~\ref{fig:dispersion_steelAustenitic} using the Newton-type method is to explicitly compute the group-velocity dispersion curves $c_\mup{g}(\omega)$,
locate the changes in sign of $c_\mup{g}(\omega)$ and use the corresponding values of $\omega$ and $k$ as initial guesses for our algorithm. 
This requires computing the dispersion curves with a sufficiently fine resolution in the prescribed $k$-values, such that at least one sample ($k$-value) is on each of the backward-wave branches. Otherwise, ZGV points will be missed. Using \num{400} wavenumber points and applying this approach to the austenitic steel plate leads to Fig.~\ref{fig:dispersion_steelAustenitic} in \SI{0.62}{\second}. The time to compute the required dispersion curves and group velocity is included therein.

Only if the initial guess is close enough to the desired ZGV point will the algorithm converge correctly. The regions of convergence for the anti-symmetric Lamb waves in the austenitic steel plate are displayed in Fig.~\ref{fig:regions_of_convergence}. Therein, each pixel defines an initial guess $(\omega_0, k_0)$, while $u_0$ is always taken as the right singular vector as described above. If the algorithm converged to any of the predefined ZGV points marked in the figure, the pixel is classified according to which ZGV point it converged to. This gives the colored regions of convergence in the figure, where gray indicates convergence to a cut-off frequency. White pixels correspond to starting values that did not converge to any of the ZGV points or cut-off frequencies. We observe large regions of convergence, allowing us to reliably use the algorithm even with rather poor initial guesses for $\omega$ and $k$. Critical are the situations where multiple ZGV points are very close or when they are close to the cutoff. 

\begin{figure}[tb]
  \centering
  % \tikzset{external/export next=false}
  % \tikzsetnextfilename{Figure4}
  % \include{tikz/convergence_regions}
  \includegraphics{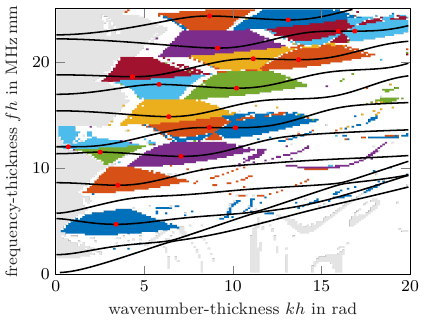}
  \caption{Regions of convergence: initial guesses $\omega_0 = 2 \pi f_0$ and $k_0$ that are close enough to a ZGV point converge correctly. Each colored pixel indicates convergence of that initial guess towards a corresponding ZGV point (gray pixels to cut-off frequencies).}
  \label{fig:regions_of_convergence}
\end{figure}

Our choice of the generalized Jacobian given in (\ref{eq:jacobian}) together with the $\beta$-correction leads to a significantly faster algorithm. Instead, in order to avoid the problem of complex-differentiability, $u$ and $u^\mup{H}$ would usually be treated as two different unknowns, leading to a considerably larger parameter vector $p$ and associated Jacobian. To obtain Fig.~\ref{fig:regions_of_convergence} that shows the result of $\num{150}\times\num{150}$ initial guesses, our method performs \num{190101} iteration steps in \SI{35.1}{\second}. The corresponding result of the common Newton method looks almost identical but requires \num{190289} iteration steps in \SI{120}{\second}.

An analysis of the convergence behavior with the iteration steps is depicted in Fig.~\ref{fig:convergence_vs_iterations}. The mean, as well as the minimum and maximum relative error in frequency for the \num{100} tested initial guesses, is shown. Numerical accuracy is achieved after only \num{5} iteration steps, underlining the efficiency of the method.

\begin{figure}[tb]
    \centering
    % \tikzset{external/export next=false}
    % \tikzsetnextfilename{Figure5}
    % \input{tikz/convergence_vs_iterations}
    \includegraphics{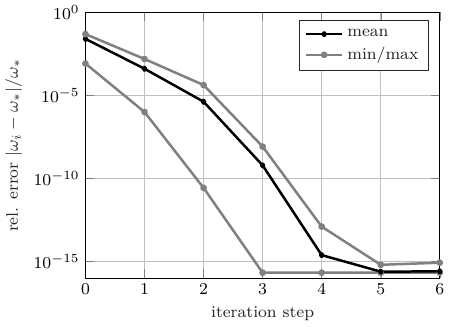}
    \caption{Decrease of the relative error in frequency with each iteration step of the Newton-type method for the ZGV point at $(\omega_* = 2 \pi \cdot \SI{11}{\mega\hertz}, k_* = \SI{7.1}{\radian\per\milli\meter})$. The \num{100} initial guesses are randomly picked from the uniform distribution with $\SI{\pm 5}{\percent}$ variation around the ZGV point, which corresponds to the convergence radius. Other ZGV points behave similarly.}
    \label{fig:convergence_vs_iterations}
\end{figure}

\section{Method of fixed relative distance}
\label{sec:method_fixed_rel_distance}
There are currently no efficient methods for large singular GEPs, and Algorithm 1 cannot be applied to problems with large $n$. We can apply the Newton-type iteration from Sec.~\ref{sec:fast_locally_convergent_newton_iteration}, but this method requires appropriate initial approximations. In this section, we present a method suitable for large problems that can compute a small number of ZGV points $(\omega_*,k_*)$ such that 
$k_*$ is close to a given target $k_0$.

As discussed previously, at a ZGV frequency~$\omega_*$, the waveguide problem in the variables $\lambda = \iu k$ and $\mu_* = \omega_*^2$, i.e.,
\begin{equation}
  \label{eq:waveguide_problem_discrete_pert}
  [ \lambda^2 L_2 + \lambda L_1 + L_0 + \mu_* M ] u = 0 \,,
\end{equation}
has a multiple (usually double) eigenvalue~$\lambda_* = \iu k_*$. Note that this follows from \eqref{eq:waveguide_problem_discrete} and \eqref{eq:exceptional_mode_discrete_2_block} both holding at the ZGV point.
Therefore, for certain $\widetilde\mu \ne \mu_*$ but close to $\mu_*$,
\begin{equation}\label{eq:pert_QEP}
[\lambda^2 L_2 + \lambda L_1 + L_0 + \widetilde\mu M ] u=0
\end{equation}
has \emph{two different eigenvalues} that are both close to $\lambda_*$. The method of fixed relative distance (MFRD) can be employed to find such points~\cite{jarlebring_computing_2011}. Adapted to our problem, it introduces the three-parameter EVP 
\begin{subequations}
\label{eq:3ep_pert}
\begin{align}
    (\eta L_2 + \lambda L_1 + L_0 + \mu M )u &=0\label{eq:3ep1pert}\\
    (\eta (1+\delta)^2 L_2 + \lambda (1+\delta) L_1 + L_0 + \mu M)v &=0\label{eq:3ep2pert}\\
    (\eta C_2 + \lambda C_1 + C_0 \phantom{+ \mu C_2}) w&=0\label{eq:3ep3pert}
\end{align}
\end{subequations}
in $\mu = \omega^2$, $\lambda = \iu k$ and $\eta = (\iu k)^2$. Therein, $\delta>0$ specifies the relative distance between the sought $\lambda$ and serves as a regularization parameter. Moreover, the matrices $C_0,C_1,C_2$ are as in \eqref{eq:3ep3}.
We conclude that for small $\delta$, the three-parameter EVP \eqref{eq:3ep_pert} has an eigenvalue
$(\widetilde\lambda,\widetilde \mu,\widetilde\lambda^2)$ close to the ZGV point $(\lambda_*,\mu_*,\lambda_*^2)$ such that $\widetilde\lambda$ and $\widetilde\lambda(1+\delta)$ are eigenvalues of the initial problem~\eqref{eq:pert_QEP}.

Solutions are obtained similarly to Sec.~\ref{sec:direct_solution_of_the_two_parameter_eigenvalue_problem} by performing a transformation into a system of conventional GEPs. However, in contrast to (\ref{eq:3EP_alg1}), the three-parameter EVP \eqref{eq:3ep_pert} is regular since
the corresponding $2n^2\times 2n^2$ matrix
\begin{equation}
    \widetilde \Delta_0 = \left|\begin{matrix}
       L_2 & L_1 & M\cr
       (1+\delta)^2 L_2 & (1+\delta)L_1 & M \cr
       C_2 & C_1 & 0 
    \end{matrix}\right|_\otimes\label{eq:threeeig_pert_1}
\end{equation}   
is nonsingular for $\delta>0$. Hence, the first GEP given by
\begin{equation}\label{eq:Delta_pert_3par}
    \widetilde \Delta_1z = \lambda \widetilde \Delta_0 z,
\end{equation} 
where
\begin{equation}
    \widetilde \Delta_1 = (-1)\left|\begin{matrix}
       L_2 & L_0 & M\cr
       (1+\delta)^2 L_2 & L_0 & M \cr
       C_2 & C_0 & 0
    \end{matrix}\right|_\otimes,\label{eq:threeeig_pert_2}
\end{equation}
is also regular and we can apply standard subspace methods,
for instance \verb|eigs| in Matlab, to compute some solutions $\lambda$ close to a chosen target $\lambda_0 = \iu k_0$. Then, the obtained eigenvector $z$ is used in the GEP associated with $\mu$, namely, 
\begin{equation}\label{eq:Delta_pert_3par_mu}
    \widetilde \Delta_M z = \mu \widetilde \Delta_0 z,
\end{equation} 
with $\widetilde\Delta_0$ as before and
\begin{align}
    \widetilde \Delta_M&= (-1) \left|\begin{matrix}
       L_2 & L_1 & L_0 \cr
       (1+\delta)^2 L_2 & (1+\delta)L_1 & L_0 \cr
       C_2 & C_1 & C_0 
    \end{matrix}\right|_\otimes, \nonumber
\end{align}
to obtain $\mu$ via
\begin{equation}
  \mu = \frac{z^\mup{H} \widetilde\Delta_M z}{z^\mup{H} \widetilde\Delta_0 z} \,.
\end{equation}
Since each solution $(\lambda, \mu)$ is close to a ZGV point, it can be used as an initial approximation for the Newton-type method from Sec.~\ref{sec:fast_locally_convergent_newton_iteration}.
The algorithm is the following.

{\small%
\noindent\vrule height 0pt depth 0.5pt width \linewidth \\*
\textbf{Algorithm~2: MFRD method for ZGV points} \\
\textbf{Input:} $n\times n$ matrices $L_2, L_1, L_0, M$, target $k_0$ \\
\textbf{Output:} ZGV points $(\omega_*, k_*)$ close to $k_0$\\*[-1.2ex]
\vrule height 0pt depth 0.3pt width \linewidth \\[2pt]
\begin{tabular}{ll}
 {\footnotesize 1:} & build matrices $\widetilde\Delta_0$ and $\widetilde \Delta_1$ in \eqref{eq:threeeig_pert_1} and \eqref{eq:threeeig_pert_2}\cr
 {\footnotesize 2:} & find eigenvalues of $\widetilde\Delta_1z=\lambda \widetilde\Delta_0z$ close to $\lambda_0 = \iu k_0$\cr
 {\footnotesize 3:} & for each $\lambda$ and eigenvector $z$\cr
 {\footnotesize 4:} & \hbox{}\quad compute $\mu=z^\mup{H}\widetilde \Delta_M z /z^\mup{H} \widetilde \Delta_0 z$\cr
 {\footnotesize 5:} & \hbox{}\quad if $|\Re(\lambda)|$ and $|\Im(\mu)|$ are both small then \cr
 {\footnotesize 6:} & \hbox{}\quad\quad apply Newton-type method from Sec.~\ref{sec:fast_locally_convergent_newton_iteration} to \cr
 & \hbox{}\quad\quad compute $(k_*,\omega_*,u_*)$ with initial guess $\Im(\lambda),\Re(\mu)$\cr 
 {\footnotesize 7:} & \hbox{}\quad\quad return $(\omega_*,k_*)$ if \eqref{eq:ZGV_condition_cg0} holds
\end{tabular}\\*
\vrule height 0pt depth 0.5pt width \linewidth \\
}%end algorithm

\begin{table*}[bt]\centering
    \caption{Comparison of the algorithms and their properties. The indicated computational time is for the antisymmetric Lamb waves in the austenitic steel plate of Fig.~\ref{fig:dispersion_steelAustenitic}.}
    \label{tab:comparison_algorithms}
    \setlength\tabcolsep{2.3ex}
    \begin{tabular}{||l c c c c||}
        \hline
        \textbf{algorithm} & \textbf{problem size} & \textbf{initial guesses} & \textbf{finds all solutions} & \textbf{speed} \\
        \hline\hline
        direct method & small & no & yes & very slow (\SI{381}{\second}) \\
        \hline
        Newton-type & large & yes & likely (using $c_\mup{g}(\omega)$) & very fast (\SI{0.62}{\second})  \\
        \hline
        MFRD+Newton-type & medium & target $k_0$ & very likely & fast (\SI{18}{\second}) \\
        \hline
    \end{tabular}
\end{table*}

\noindent
We can apply Algorithm 2 several times using different targets $k_0$ and thus scan a wavenumber interval $[k_a,k_b]$ for ZGV points $(\omega_*,k_*)$. It is worth remarking that from one search to the next, the target $k_0$ can be chosen such that it becomes highly unlikely to miss any ZGV point. To this end, $k_0$ is chosen close to the largest $k_*$ found in the previous search (at a smaller $k_0$). Alternatively, if $n$ is small enough, we can compute all eigenvalues in step 2, refine the solutions with the Newton-type method and thus obtain all ZGV points. An advantage of 
Algorithm 2 over Algorithm 1 is that the GEP \eqref{eq:Delta_pert_3par} is of size $2n^2\times 2n^2$ instead of $4n^2\times 4n^2$. But most importantly, the problem is regular if the employed numerical discretization method provides regular matrices.

The regularization parameter $\delta$ has to be selected carefully. If $\delta$ is too small, then the GEP \eqref{eq:Delta_pert_3par} is close to being singular and the method of choice might fail to find eigenvalues in step 2.
On the other hand, if $\delta$ is too large, then initial approximations might be too poor, and the Newton-type method does not converge.

Applying Algorithm 2 to the same problem as before using $\delta = \num{1e-6}$ yields all ZGV points as already depicted in Fig.~\ref{fig:dispersion_steelAustenitic}. For this end, the computation is done at \num{12} different targets $k_0$. The total computing time is \SI{18}{\second}.

The relation between the two-parameter EVPs given in (\ref{eq:3ep_pert}) and (\ref{eq:3EP_alg1}) should be discussed. As ZGV points are double eigenvalues $\lambda_* = \iu k_*$ for $\mu_* = \omega_*^2$, it might seem appropriate at first sight to set $\delta = 0$ in (\ref{eq:3ep_pert}) to find these solutions. The resulting singular EVP could be solved as in Sec.~\ref{sec:direct_solution_of_the_two_parameter_eigenvalue_problem}. However, this approach is able to find only double eigenvalues of geometric 
multiplicity two~\cite{muhic_method_2014}, i.e., crossing points in the dispersion curves. The geometric multiplicity of an eigenvalue is the number of linearly independent eigenvectors associated with it. ZGV points are double eigenvalues of geometric multiplicity one. For this reason, the extended two-parameter EVP as given in (\ref{eq:3EP_alg1}) is required in order to compute ZGV points with a direct approach.

\section{Conclusions and outlook} 
\label{sec:conclusions_and_outlook}

ZGV points are double eigenvalue points on the dispersion curves that are characterized by the appearance of an exceptional mode (in contrast to crossing points of dispersion curves). We have derived the associated equations for anisotropic plates. The system of equations governing ZGV resonances consists of the guided wave problem and the exceptional mode equation. This represents a singular two-parameter eigenvalue problem that is difficult to solve.

We have presented three very different but complementary computational methods to locate ZGV points in the frequency-wavenumber plane. Their properties are summarized in Table~\ref{tab:comparison_algorithms}. While the direct method is globally convergent but slow and applicable only to small problems, the Newton-type iterative one is very fast but locally convergent. Our third method, which employs the MFRD, combines the direct solution of a regularized problem with the Newton-type procedure. As a result, this method can be applied to rather large problems, does not need initial guesses and is likely to find all ZGV points. We provide the implementation of the three algorithms together with an example in \texttt{GEW ZGV computation}~\cite{gew_zgv_computation}.

While we have applied the concepts to homogeneous but anisotropic plates only, the methods are quite generally applicable. Equations of the same structure are obtained when modeling transversely inhomogeneous waveguides of arbitrary cross-section. Hence, the methods presented here can be used in the same way to find ZGV points of such waveguides.

By fixing $\delta=0$, the system \eqref{eq:3ep_pert} from the method of fixed relative distance (MFRD) % that is exploited in our third procedure for ZGV points 
could alternatively be employed to find the crossing points in the dispersion curves in a general waveguide setting. The computational effort to solve the 
obtained singular three-parameter eigenvalue problem is slightly lower than our direct method to compute ZGV points. Note that such crossing points are known to exist only if the parametrized eigenvalue problem is uniformly decomposable~\cite{Uhlig_SIMAX}.

\begin{acknowledgments}
Bor Plestenjak has been supported by the Slovenian Research Agency (grant N1-0154).

Daniel A. Kiefer and Claire Prada have been supported by LABEX WIFI (Laboratory of Excellence within the French Program “Investments for the Future”) under Reference Nos. ANR-10-LABX-24 and ANR-10-IDEX-0001-02 PSL.
\end{acknowledgments}

\appendix

\section{DERIVATION OF THE WAVEGUIDE PROBLEM} % (fold)
\label{sec:derivation_of_the_waveguide_problem}\noindent
Recall the ansatz (\ref{eq:plane_wave_ansatz}) for plane harmonic guided waves:
\begin{equation}
  \ten{\bar{u}}(x,y,t) = \ten{u}(y, k, \omega) \e^{\iu k x - \iu \omega t} \,.
  \label{eq:plane_wave_ansatz_appendix}
\end{equation}
Note that for the above displacements $\partial_z \ten{\bar{u}} = 0$ and $\partial_x \ten{\bar{u}} = \iu k \ten{\bar{u}}$. Acknowledging this, we write the Nabla operator as
\begin{equation}
  \nabla = \dirvec{x} \iu k + \dirvec{y} \partial_y \,.
\end{equation}

Firstly, the proposed motions (\ref{eq:plane_wave_ansatz_appendix}) need to satisfy the equation of motion (\ref{eq:equation_motion_time}). Inserting the above definitions and dropping the complex exponentials yields
\begin{equation}
  (\dirvec{x} \iu k + \dirvec{y} \partial_y) \cdot \ten{c} : (\dirvec{x} \iu k + \dirvec{y} \partial_y) \ten{u} + \rho \omega^2 \ten{u} = \ten{0} \,.
\end{equation}
Note that, per definition, the double contraction consists of two sequential scalar products of the adjacent tensor dimensions, i.e., $\ten{c} : \dirvec{x} \ten{u} = (\ten{c} \cdot \dirvec{x}) \cdot \ten{u}$. Therewith, it is possible to factor out $\ten{u}$ in the above equation. Expanding the products and rearranging terms leads to 
\begin{align}
  [ (\iu k)^2 \dirvec{x} \cdot \ten{c} \cdot \dirvec{x} + 
     \iu k (\dirvec{x} \cdot \ten{c} \cdot \dirvec{y} + \dirvec{y} \cdot \ten{c} \cdot \dirvec{x}) \partial_y &+ \nonumber \\ 
     \dirvec{y} \cdot \ten{c} \cdot \dirvec{y} \partial_y^2 + 
     \omega^2 \rho \ten{I} ] \cdot \ten{u} &= \ten{0} \,,
\end{align}
where $\ten{I}$ is the second order unit tensor. With the definition $\ten{c}_{ij} := \dirvec{i} \cdot \ten{c} \cdot \dirvec{j}$, the above is identical to (\ref{eq:waveguide_problem_motion}). 

Secondly, the motions (\ref{eq:plane_wave_ansatz_appendix}) also need to satisfy the traction-free boundary condition (\ref{eq:BC_free}). Inserting and dropping the complex exponential yields 
\begin{equation}
  \dirvec{y} \cdot \ten{c} : (\dirvec{x} \iu k + \dirvec{y} \partial_y) \ten{u} = \ten{0} \,\hphantom{,} \text{ at } y = \pm h/2 \,.
\end{equation}
By factoring out $\ten{u}$ and multiplying the terms, we immediately obtain the desired result (\ref{eq:waveguide_problem_BC}).

\section{COLLINEARITY OF POWER FLUX AND WAVE VECTOR} % (fold)
\label{sec:collinearity_of_power_flux_and_wave_vector}\noindent
The relation between the uncoupling of Lamb- and SH-waves and the collinearity of the power flux and the wave vector are to be discussed. Note first that the SH-polarized displacement component $u_z$ decouples from the Lamb-polarized motions ($u_x$, $u_y$) when the stiffness tensor components satisfy
\begin{equation}
  \label{eq:decoupling_condition}
  c_{z \alpha\beta\gamma} = 0 \ \text{with}\ \alpha, \beta, \gamma \in \{x, y\}\,,
\end{equation}
for details see Ref.~\citen{kiefer_elastodynamic_2022}. Note that the Greek dummy indices run only over $\{x, y\}$. 

The power flux~$\ten{p}$ of the guided waves is obtained from the particle velocity $\ten{v} = -\iu \omega \ten{u}$ and the stress $\ten{T} = \ten{c} : \nabla \ten{u}$. Using again $\nabla = \dirvec{x} \iu k + \dirvec{y} \partial_y$, we obtain
\begin{equation}
  \ten{p} = - \ten{v} \cdot \ten{T} = \iu \omega \ten{u} \cdot \left[ \iu k \ten{c} : \dirvec{x} \ten{u} + \ten{c} : \dirvec{y} \partial_y \ten{u}\right] \,.
\end{equation}

The $y$-component of $\ten{p}$ is always zero due to the flux-free BCs of the guided waves. Hence, the power flux vector~$\ten{p}$ and the wave vector $k \dirvec{x}$ are collinear if the $z$-component of $\ten{p}$ vanishes. Due to the symmetry in $\ten{T}$ (or equivalently in $\ten{c}$), this component is $p_z = \dirvec{z} \cdot (-\ten{v} \cdot \ten{T}) = - \ten{v} \cdot (\dirvec{z} \cdot \ten{T})$, i.e., 
\begin{equation}\label{eq:transverse_power_flux}
  p_z = \iu \omega \iu k \ten{u} \cdot \ten{c}_{zx} \cdot \ten{u} + \iu \omega \ten{u} \cdot \ten{c}_{zy} \cdot \partial_y \ten{u} \,,
\end{equation}
where $\ten{c}_{zx} = \dirvec{z} \cdot \ten{c} \cdot \dirvec{x}$ has components $c_{z i j x}$ with $i,j \in \{x, y, z\}$ and similarly for $\ten{c}_{zy}$.
From the above, we conclude that if $p_z = 0$ for arbitrary $\ten{u}$, this implies the decoupling of Lamb- and SH-polarizations, i.e., (\ref{eq:decoupling_condition}) is satisfied.

On the other hand, let's now assume Lamb wave motions, i.e., (\ref{eq:decoupling_condition}) holds and $u_z = 0$. In this case, the transversal power flux can explicitly be written in terms of the displacement components as
\begin{equation}
  p_z^\mup{Lamb} = \iu \omega \iu k c_{z\alpha\beta x} u_\alpha u_\beta + \iu \omega c_{z\alpha\beta y} u_\alpha \partial_y u_\beta \,,
\end{equation}
where summation is implied over the repeated dummy indices $\alpha, \beta \in \{x, y\}$. From the decoupling condition (\ref{eq:decoupling_condition}) we conclude that $p_z^\mup{Lamb} = 0$. Overall, we can state that the power flux and the wave vector of Lamb waves are collinear if and only if Lamb- and SH-polarizations decouple.

\section{DERIVATION OF THE EXCEPTIONAL MODE EQUATION} % (fold)
\label{sec:derivation_of_the_exceptional_mode_equation}\noindent
The PDE of the exceptional mode equation is obtained by inserting the ansatz
\begin{equation}
  \label{eq:appendix_exceptional_mode_ansatz}
  \ten{\bar{u}}(x,y,t) = \left[ \ten{u}'(y, k, \omega) + x \ten{u}(y, k, \omega) \right] \e^{\iu k x - \iu \omega t} 
\end{equation}
into the equation of motion
\begin{align}
  (\dirvec{x} \partial_x + \dirvec{y} \partial_y) \cdot \ten{c} : (\dirvec{x} &\partial_x + \dirvec{y} \partial_y) \ten{\bar{u}}(x,y,t) \nonumber\\ 
  &+ \omega^2 \rho \ten{I} \cdot \ten{\bar{u}}(x,y,t) = \ten{0} \,,
\end{align}
where we have rewritten the Nabla operator as
\begin{equation}
  \label{eq:appendix_nabla}
  \nabla = \dirvec{x} \partial_x + \dirvec{y} \partial_y \,.
\end{equation} 

Using the exceptional mode ansatz (\ref{eq:appendix_exceptional_mode_ansatz}) and dropping the explicit notation of the dependence on $(y, k, \omega)$, one finds
\begin{align}
  \partial_x \ten{\bar{u}}(x,y,t) &= \left[ \ten{u} + \iu k \ten{u}' + \iu k x \ten{u} \right] \e^{\iu k x - \iu \omega t} \,, \\
  \partial_y \ten{\bar{u}}(x,y,t) &= \left[ \partial_y \ten{u}' + x \partial_y \ten{u} \right] \e^{\iu k x - \iu \omega t} \,.
\end{align}
Accordingly, the stress tensor $\ten{c} : \nabla \ten{\bar{u}}(x, y, t)$ is given by
\begin{align}
  \label{eq:appendix_stress}
  \ten{c} : & \left[ \dirvec{x} \partial_x \ten{\bar{u}}(x,y,t) + \dirvec{y} \partial_y \ten{\bar{u}}(x,y,t) \right] = \nonumber \\
  \hphantom{+} & (\ten{c} \cdot \dirvec{x}) \cdot \left[ \ten{u} + \iu k \ten{u}' + \iu k x \ten{u} \right] \e^{\iu k x - \iu \omega t} \nonumber \\
   + &(\ten{c} \cdot \dirvec{y}) \cdot \left[ \partial_y \ten{u}' + x \partial_y \ten{u} \right] \e^{\iu k x - \iu \omega t} \,.
\end{align}
Lastly, after forming the divergence of the above stress field, i.e., contracting from the left with $(\dirvec{x} \partial_x + \dirvec{y} \partial_y)$, we balance with the inertial term. For conciseness, we introduce the notation $\ten{c}_{ij} = \dirvec{i} \cdot \ten{c} \cdot \dirvec{j}$. Performing the multiplications and regrouping the terms, one finally obtains the equation of motion in the form
\begin{align}
   &\hphantom{+} x \underbrace{ \left[ (\iu k)^2 \ten{c}_{xx} + \iu k (\ten{c}_{xy} + \ten{c}_{yx}) \partial_y + \ten{c}_{yy} \partial_y^2 + \omega^2 \rho \ten{I} \right] }_{\ten{W}} \cdot \ten{u} \nonumber \\
   &+ \underbrace{ \left[ (\iu k)^2 \ten{c}_{xx} + \iu k (\ten{c}_{xy} + \ten{c}_{yx}) \partial_y + \ten{c}_{yy} \partial_y^2 + \omega^2 \rho \ten{I} \right] }_{\ten{W}} \cdot \ten{u}' \nonumber \\
   &+ \underbrace{ \left[ 2 \iu k \ten{c}_{xx} + (\ten{c}_{xy} + \ten{c}_{yx}) \partial_y \right] }_{\ten{W}'} \cdot \ten{u} = \ten{0} \,.
\end{align}
As $\ten{u}$ is a guided wave solution, $\ten{W}\cdot \ten{u} = \ten{0}$. Therewith, the above equation is precisely the system (\ref{eq:exceptional_mode_motion}). 

Secondly, the corresponding BC is also needed. The normal vector on the plate's surface is $\pm\dirvec{y}$. The traction free condition is, hence, $\dirvec{y} \cdot \ten{c} : \nabla \ten{u}(x, y, t) = \ten{0}$.  From (\ref{eq:appendix_stress}) we obtain
\begin{equation}
  x \underbrace{ \left[ \iu k \ten{c}_{yx} + \ten{c}_{yy}\partial_y \right] }_{\ten{B}} \cdot \ten{u} + \underbrace{ \left[ \iu k \ten{c}_{yx} + \ten{c}_{yy}\partial_y \right] }_{\ten{B}} \cdot \ten{u}' + \underbrace{ \ten{c}_{yx} }_{\ten{B}'} \cdot \ten{u} = \ten{0} \,,
\end{equation}
where we have again performed the multiplications and resorted the terms. Noting that $\ten{B} \cdot \ten{u} = \ten{0}$, the above is exactly the statement (\ref{eq:exceptional_mode_BC}).

\section{MULTIPARAMETER EIGENVALUE PROBLEM}
\label{sec:mep}\noindent
A \emph{$k$-parameter eigenvalue problem} has the form
\begin{align}
    V_{10}x_1 & = \lambda_1 V_{11}x_1 + \cdots + \lambda_k V_{1k}x_1\nonumber\\
    & \vdots \label{eq:mep}\\
    V_{k0}x_k & = \lambda_1 V_{k1}x_k + \cdots + \lambda_k V_{kk}x_k,\nonumber
\end{align}
where $V_{ij}$ is an $n_i\times n_i$ matrix and $x_i\ne 0$ for $i=1,\ldots,k$.
If \eqref{eq:mep} is satisfied
then a $k$-tuple $(\lambda_1,\ldots,\lambda_k)$ is an \emph{eigenvalue} and 
$x_1\otimes \cdots \otimes x_k$ is the corresponding \emph{eigenvector}. The problem
\eqref{eq:mep}
is related to a system of GEPs
\begin{equation}
    \Delta_1 z =\lambda_1 \Delta_0z,\quad \ldots,\quad \Delta_k z=\lambda_k\Delta_0z,
\end{equation}
where $z=x_1\otimes\cdots\otimes x_k$ and matrices
$$\Delta_0=\left|\begin{matrix}V_{11} & \cdots & V_{1k}\cr
\vdots &  & \vdots \cr 
V_{k1} & \cdots & V_{kk}\end{matrix}\right|_\otimes
$$
and
$$\Delta_i=\left|\begin{matrix}V_{11} & \cdots & V_{1,i-1} & V_{10} & V_{1,i+1} & \cdots & V_{1k}\cr
\vdots &  & \vdots & \vdots & \vdots & & \vdots \cr 
V_{k1} & \cdots & V_{k,i-1} & V_{k0} & V_{k,i+1} & \cdots & V_{kk}\end{matrix}\right|_\otimes
$$
for $i=1,\ldots,k$
are called operator determinants, for details see, e.g., Ref.~\citen{atkinson_multiparameter_1972}. If
$\Delta_0$ is nonsingular then \eqref{eq:mep} is \emph{regular} and matrices 
$\Delta_0^{-1}\Delta_1,\ldots,\Delta_0^{-1}\Delta_k$ commute. A regular 
$k$-parameter EVP \eqref{eq:mep} has $n_1\cdots n_k$ eigenvalues.

\section{EIGENVALUES OF A SINGULAR GEP}
\label{sec:singular}\noindent
Matrices $\Delta_1$ and $\Delta_0$ from \eqref{eq:delta01} form a \emph{singular GEP}, i.e., $\det(\Delta_1-\lambda \Delta_0)\equiv 0$.
Then $\lambda_0\in\mathbb C$ is a finite eigenvalue if $\rank(\Delta_1-\lambda_0\Delta_0)< \nrank(\Delta_1,\Delta_0)$, where
\[
\nrank(\Delta_1,\Delta_0)=\max_{\xi\in\mathbb C}\rank(\Delta_1-\xi \Delta_0)
\]
 is the \emph{normal rank} of the GEP.

\begin{theorem}\label{thm:main}
If $(\iu k,\omega)$ is a solution 
of \eqref{eq:waveguide_problem_discrete} and \eqref{eq:exceptional_mode_discrete_2_block},
then $\lambda=\iu k$ is an eigenvalue of $\Delta_1-\lambda \Delta_0$.
\end{theorem}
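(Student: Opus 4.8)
The plan is to produce, at the parameter triple $\lambda_*=\iu k_*$, $\mu_*=\omega_*^2$, $\eta_*=\lambda_*^2$, an explicit \emph{decomposable} joint eigenvector of the linear three-parameter EVP \eqref{eq:3EP_alg1}, and then to invoke the standard correspondence between a multiparameter eigenvalue problem and its associated operator-determinant pencils, which turns that eigenvector into a kernel vector of $\Delta_1-\lambda_*\Delta_0$.

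First I would read off the three component vectors. Because $(\iu k,\omega)$ solves \eqref{eq:waveguide_problem_discrete}, there is a nonzero $u$ with $W(k_*,\omega_*)u=0$; since $\eta_* L_2+\lambda_* L_1+L_0+\mu_* M=W(k_*,\omega_*)$, this $u$ satisfies \eqref{eq:3ep1}. Because $(\iu k,\omega)$ also solves \eqref{eq:exceptional_mode_discrete_2_block}, there is a nonzero $v=[u\transp,{u'}\transp]\transp$ annihilated by $\eta_*\widetilde L_2+\lambda_*\widetilde L_1+\widetilde L_0+\mu_*\widetilde M$, which is exactly \eqref{eq:3ep2}. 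For the third component I would take $w=[1,\ \lambda_*]\transp$ and check by a $2\times 2$ computation that $(\eta_* C_2+\lambda_* C_1+C_0)w=0$; this is the only place the algebraic constraint $\eta_*=\lambda_*^2$ is used, and it is precisely what makes $\det(\eta_* C_2+\lambda_* C_1+C_0)=\eta_*-\lambda_*^2$ vanish so that $w$ spans the $1$-dimensional kernel. Hence $(\lambda_*,\mu_*,\eta_*)$ is an eigenvalue of \eqref{eq:3EP_alg1} with the nonzero decomposable eigenvector $z=u\otimes v\otimes w$.

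Next I would invoke the classical fact (Atkinson's multiparameter theory and its singular-case refinements in the references cited in the paper) that a decomposable eigenvector $z=u\otimes v\otimes w$ attached to the eigenvalue $(\lambda_*,\mu_*,\eta_*)$ of the three-parameter EVP obeys $\Delta_1 z=\lambda_*\Delta_0 z$ (and likewise $\Delta_M z=\mu_*\Delta_0 z$, $\Delta_2 z=\eta_*\Delta_0 z$), with $\Delta_0,\Delta_1$ the operator determinants of \eqref{eq:delta01}. If a self-contained argument is wanted, this is a Cramer-type identity: expand the operator determinants along the last block row, collect the Kronecker terms, and substitute \eqref{eq:3ep1}--\eqref{eq:3ep3}; since $\Delta_1$ is obtained from $\Delta_0$ by replacing the $\lambda$-coefficient column by the constant column, this is exactly the Cramer setup for the parameter $\lambda$, and everything collapses to $(\Delta_1-\lambda_*\Delta_0)z=0$. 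As $z\neq 0$, $\lambda_*=\iu k_*$ lies in the kernel of $\Delta_1-\lambda_*\Delta_0$.

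The step I expect to require the most care is reconciling this with the definition of a \emph{finite} eigenvalue of a singular GEP used in Appendix~\ref{sec:singular}: the pencil $(\Delta_1,\Delta_0)$ is singular, so exhibiting one nonzero kernel vector is not literally the same as a rank drop \emph{below the normal rank}. The natural way to close this gap is to show that $z$ does not lie in the constant common null space of the pencil, i.e.\ that $\Delta_0 z\neq 0$, exploiting the component equations together with $M$ and $\widetilde M$ being positive definite (so $Mu\neq 0$, $\widetilde M v\neq 0$) and $\lambda_*=\iu k_*\neq 0$ at a ZGV point; then the nullity of $\Delta_1-\lambda_*\Delta_0$ genuinely exceeds the normal-rank deficiency and $\lambda_*$ is a finite eigenvalue in the required sense. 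Pinning down that this holds for every ZGV point rather than only generic ones is the delicate part; the remaining items — the $2\times 2$ kernel check and the determinant expansion — are routine.
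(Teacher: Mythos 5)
Your construction of the decomposable eigenvector $z=u\otimes v\otimes w$ (with $v=[u\transp,{u'}\transp]\transp$ and $w=[1,\lambda_*]\transp$) and the Cramer-type identity $\Delta_1 z=\lambda_*\Delta_0 z$ are fine, and you correctly flag where the real difficulty lies. But your proposed way of closing that gap does not work: for a \emph{singular} pencil, $\Delta_1-\lambda\Delta_0$ has a nontrivial kernel at \emph{every} $\lambda$ (the evaluation of a right minimal basis), and those generic kernel vectors typically satisfy $\Delta_0 z\neq 0$ as well. A tiny example such as $\Delta_1-\lambda\Delta_0=\bigl[\begin{smallmatrix}1&\lambda\\0&0\end{smallmatrix}\bigr]$ has kernel vector $(-\lambda,1)\transp$ with $\Delta_0 z\neq 0$ for all $\lambda$, yet no finite eigenvalue. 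So ``$(\Delta_1-\lambda_*\Delta_0)z=0$ and $\Delta_0 z\neq 0$'' does not certify $\rank(\Delta_1-\lambda_*\Delta_0)<\nrank(\Delta_1,\Delta_0)$; one must show that the kernel at $\lambda_*$ is strictly larger than the generic kernel, i.e.\ that $z$ (or some vector) lies outside the singular part of the pencil, and positive definiteness of $M$, $\widetilde M$ and $k_*\neq 0$ do not give you that.

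This is precisely the work the paper's proof does, and it is the part missing from your proposal. The paper first reduces, via a block-partitioned Kronecker manipulation and the invertibility of $M$, the rank of $\Delta_1-\lambda\Delta_0$ to $\rank(A(\lambda))+\rank(B)$ with $B$ constant and $A(\lambda)$ built from $Q(\lambda)=M^{-1}(L_0+\lambda L_1+\lambda^2 L_2)$; it then computes the normal rank of $A$ explicitly (for generic $\xi$, $Q(\xi)$ has distinct eigenvalues, giving $\rank(A(\xi))=2n^2-n$, with the generic kernel spanned by the $n$ vectors $[0,\ (x_i\otimes x_i)\transp]\transp$); finally, at $\lambda_0=\iu k$ it exhibits the additional null vector $[(u\otimes u)\transp,\ (u\otimes u')\transp]\transp$, whose nonzero first block makes it independent of the always-present $n$-dimensional kernel, so the nullity exceeds the generic one and the rank genuinely drops below the normal rank (with a separate remark handling the uniformly decomposable case blockwise). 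Without an argument of this type — a computation of the normal rank and a proof that your $z$ contributes a kernel direction beyond the generic one — the final implication in your proposal is unjustified, so the proof as written has a genuine gap at exactly the step you identified as delicate.
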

\begin{proof}

We introduce $Q(\lambda)=M^{-1}(L_0+\lambda L_1 + \lambda^2 L_2)$. 
Using a block partitioned version of the Kronecker product \cite{TracyJinadasa} we can show that 
$$
\rank(\Delta_1-\lambda \Delta_0)=\rank\left(
\begin{bmatrix}A(\lambda) &  \cr  & B\end{bmatrix}\right),
$$
where $A(\lambda)=$
$$
\begin{bmatrix} 
I\otimes Q(\lambda)-Q(\lambda)\otimes I & \cr
I\otimes Q'(\lambda) & I\otimes Q(\lambda)-Q(\lambda)\otimes I
\end{bmatrix}
$$
and
$$B = \begin{bmatrix}1 & 0\cr 0 & 1\end{bmatrix}\otimes 
(M^{-1}L_2\otimes I-I\otimes M^{-1}L_2).
$$
Clearly, $\rank(\Delta_1-\lambda \Delta_0)=\rank(A(\lambda))+\rank(B)$ and 
$\lambda_0$ is an eigenvalue of $\Delta_1-\lambda \Delta_0$ when
$\rank(A(\lambda_0))<\nrank(A(\lambda))$.

Let us assume that $Q(\lambda)$ is not a uniformly decomposable matrix flow, i.e., there does not exist
a unitary matrix $U$ such that $U^\mup{H}Q(\lambda)U$ has the same block diagonal structure with at least two blocks 
for all $\lambda$. Then, see, e.g., Ref.~\citen{Uhlig_SIMAX}, for a generic $\xi$ all eigenvalues of 
$Q(\xi)$ are distinct. Let $Q(\xi)x_i=\sigma_i x_i$, where $x_i$ is nonzero, for
$i=1,\ldots,n$ and $\sigma_1,\ldots,\sigma_n$ are distinct. Vectors $x_1,\ldots,x_n$ form a basis for $\mathbb C^n$ and vectors
$x_i\otimes x_j$, $i,j=1,\ldots,n$, form a basis for $\mathbb C^{n^2}$.
It is easy to see that $$(I\otimes Q(\xi)-Q(\xi)\otimes I)(x_i\otimes x_j)=(\sigma_j-\sigma_i)(x_i\otimes x_j)$$
for $i,j=1,\ldots,n$. Thus $\rank(I\otimes Q(\xi)-Q(\xi)\otimes I)=n^2-n$ and the null space 
of $I\otimes Q(\xi)-Q(\xi)\otimes I$ is spanned by vectors 
$x_i\otimes x_i$, $i=1,\ldots,n$.

We get 
\begin{equation}
A(\xi)\begin{bmatrix} 0 \cr x_i\otimes x_j \end{bmatrix}=\begin{bmatrix} 0 \cr (\sigma_j-\sigma_i)x_i\otimes x_j \end{bmatrix}\label{eq:thm1}
\end{equation}
and
\begin{equation}
A(\xi)\begin{bmatrix} x_i\otimes x_j \cr 0\end{bmatrix}=\begin{bmatrix}(\sigma_j-\sigma_i)x_i\otimes x_j\cr x_i\otimes Q'(\xi)x_j  \end{bmatrix}\label{eq:thm2}
\end{equation}
for $i,j=1,\ldots,n$.
For $i\ne j$ this gives $2n^2-2n$ linearly independent vectors from the image of $A(\xi)$, while 
vectors $\begin{bmatrix} 0 \cr x_i\otimes x_i \end{bmatrix}$, $i=1,\ldots,n$, are clearly in
the null space of $A(\xi)$. 
What remains are vectors 
\begin{equation}\label{eq:thm_case_ii}
A(\xi)\begin{bmatrix} x_i\otimes x_i \cr 0\end{bmatrix}=\begin{bmatrix}0 \cr x_i\otimes Q'(\xi)x_i  \end{bmatrix}
\end{equation}
for $i=1,\ldots,n$. Each vector $Q'(\xi)x_i$ can be written as a linear combination of $x_1,\ldots,x_n$, i.e.,
\[ 
Q'(\xi)x_i=\sum_{\ell=1}^n \alpha_{i\ell}x_\ell
\]
for $i=1,\ldots,n$. For a generic $\xi$, $\alpha_{ii}\ne 0$ for $i=1,\ldots,n$, so 
vectors from \eqref{eq:thm_case_ii} give additional $n$ linearly independent vectors
from the image of $A(\xi)$ and $\rank(A(\xi))=2n^2-n$.

Let $(\iu k,\omega)$ be a solution of \eqref{eq:waveguide_problem_discrete} and \eqref{eq:exceptional_mode_discrete_2_block}
with the corresponding vectors $u$ and $u'$. If we take $\lambda_0=\iu k$ then 
\begin{align*}
(Q(\lambda_0)+\omega^2I) u&=0,\\
Q'(\lambda_0)u + (Q(\lambda_0)+\omega^2 I)u'&=0.
\end{align*}
It is easy to check that 
\[
A(\lambda_0)\begin{bmatrix}u\otimes u\cr u\otimes u'\end{bmatrix}=0
\]
and we have a vector in the null space that is clearly linearly independent from
vectors that we get from \eqref{eq:thm1}. It follows that 
$\rank(A(\lambda_0))<\nrank(A(\lambda))$ and $\lambda_0$ is an eigenvalue of $\Delta_1-\lambda \Delta_0$.

If $Q(\lambda)$ is a uniformly decomposable flow, the theorem is still valid, the only difference is that
in the proof we have to consider individual blocks in the block diagonal form of $Q(\lambda)$.
\end{proof}

\section{MATERIAL PARAMETERS}
\label{sec:material_parameters}\noindent
The elastic parameters used in the calculation were those of orthotropic austenitic steel provided by Lanceleur~et~al.\cite{lanceleur_use_1993}:
\begin{itemize}
    \item Voigt-notated stiffness in $10^{11}$ \si{\pascal}:
    \begin{equation*}
        C = 
        \begin{bmatrix}
        2.50    &  1.80    &  1.38     &      0   &      0   &      0   \\
        1.80    &  2.50    &  1.12     &      0   &      0   &      0   \\
        1.38    &  1.12    &  2.50     &      0   &      0   &      0   \\
             0  &      0   &     0     &  0.70    &      0   &      0   \\
             0  &      0   &     0     &   0      &   1.17   &      0   \\
             0  &      0   &     0     &   0      &      0   &    0.91 
        \end{bmatrix}
    \end{equation*}
  \item Mass density:
  \begin{equation}
      \rho = \SI{7840}{\kilo\gram\per\meter\cubed}\nonumber
  \end{equation}
\end{itemize}
The stiffness tensor indicated here has been rotated by \SI{90}{\degree} around $\dirvec{x}$ and then around $\dirvec{y}$ (extrinsic, passive rotation), such that the material coordinate system aligns with the one in Fig.~\ref{fig:plate}.

%%%%%%% BIBLIOGRAPHY %%%%%%%%%
% use either bib:
% \bibliographystyle{plainnat}
% \bibliography{ZGVcomputation}

\begin{thebibliography}{10}
\def\enquote#1,{``#1,''}
\def\enxquote#1{``#1''}
\expandafter\ifx\csname url\endcsname\relax
  \def\url#1{\texttt{#1}}\fi
\expandafter\ifx\csname urlprefix\endcsname\relax\def\urlprefix{URL }\fi
\providecommand{\bibinfo}[2]{#2}
\def\plainquote#1{``#1''}
\providecommand{\noopsort}[1]{}
\providecommand{\switchargs}[2]{#2#1}
\providecommand{\dourl}[1]{\href{http://#1}{\nolinkurl{#1}}}
  \def\eatspace #1{#1}

\bibitem{auld_acoustic_1990}
\bibinfo{author}{B.~A. Auld}, \emph{\bibinfo{title}{Acoustic {Fields} and
  {Waves} in {Solids} 2}}, Vol.~\bibinfo{volume}{2}, \bibinfo{}{2} ed.
  (\bibinfo{publisher}{Krieger Publishing Company}, \bibinfo{address}{Malabar},
  \bibinfo{year}{1990}).

\bibitem{royer_elastic_2022}
\bibinfo{author}{D.~Royer} and \bibinfo{author}{T.~Valier-Brasier},
  \emph{\bibinfo{title}{Elastic {Waves} in {Solids} 1: {Propagation}}}
  (\bibinfo{publisher}{ISTE and John Wiley \& Sons}, \bibinfo{year}{2022}).

\bibitem{tassoulas_wave_1984}
\bibinfo{author}{J.~L. Tassoulas} and \bibinfo{author}{T.~R. Akylas},
  \enquote{\bibinfo{title}{On {Wave} {Modes} {With} {Zero} {Group} {Velocity}
  in an {Elastic} {Layer}}},  \bibinfo{journal}{Journal of Applied Mechanics}
  \textbf{51}(3), \bibinfo{pages}{652--656} (\bibinfo{year}{1984})
  \dodoi{10.1115/1.3167688}.

\bibitem{prada_laser-based_2005}
\bibinfo{author}{C.~Prada}, \bibinfo{author}{O.~Balogun}, and
  \bibinfo{author}{T.~W. Murray}, \enquote{\bibinfo{title}{{Laser-based
  ultrasonic generation and detection of zero-group velocity Lamb waves in thin
  plates}}},  \bibinfo{journal}{Applied Physics Letters} \textbf{87}(19),
  \bibinfo{pages}{1--3} (\bibinfo{year}{2005}) \dodoi{10.1063/1.2128063}.

\bibitem{prada_local_2008}
\bibinfo{author}{C.~Prada}, \bibinfo{author}{D.~Clorennec}, and
  \bibinfo{author}{D.~Royer}, \enquote{\bibinfo{title}{Local vibration of an
  elastic plate and zero-group velocity {Lamb} modes}},  \bibinfo{journal}{The
  Journal of the Acoustical Society of America} \textbf{124}(1),
  \bibinfo{pages}{203--212} (\bibinfo{year}{2008}) \dodoi{10.1121/1.2918543}.

\bibitem{holland_air-coupled_2003}
\bibinfo{author}{S.~D. Holland} and \bibinfo{author}{D.~E. Chimenti},
  \enquote{\bibinfo{title}{Air-coupled acoustic imaging with
  zero-group-velocity {Lamb} modes}},  \bibinfo{journal}{Applied Physics
  Letters} \textbf{83}(13), \bibinfo{pages}{2704--2706} (\bibinfo{year}{2003})
  \dodoi{10.1063/1.1613046}.

\bibitem{ces_thin_2011}
\bibinfo{author}{M.~Cès}, \bibinfo{author}{D.~Clorennec},
  \bibinfo{author}{D.~Royer}, and \bibinfo{author}{C.~Prada},
  \enquote{\bibinfo{title}{Thin layer thickness measurements by zero group
  velocity {Lamb} mode resonances}},  \bibinfo{journal}{Review of Scientific
  Instruments} \textbf{82}(11), \bibinfo{pages}{114902} (\bibinfo{year}{2011})
  \dodoi{10.1063/1.3660182}.

\bibitem{baggens_systematic_2015}
\bibinfo{author}{O.~Baggens} and \bibinfo{author}{N.~Ryden},
  \enquote{\bibinfo{title}{Systematic errors in {Impact}-{Echo} thickness
  estimation due to near field effects}},  \bibinfo{journal}{NDT \& E
  International} \textbf{69}, \bibinfo{pages}{16--27} (\bibinfo{year}{2015})
  \dodoi{10.1016/j.ndteint.2014.09.003}.

\bibitem{grunsteidl_inverse_2016}
\bibinfo{author}{C.~Grünsteidl}, \bibinfo{author}{T.~W. Murray},
  \bibinfo{author}{T.~Berer}, and \bibinfo{author}{I.~A. Veres},
  \enquote{\bibinfo{title}{Inverse characterization of plates using zero group
  velocity {Lamb} modes}},  \bibinfo{journal}{Ultrasonics} \textbf{65},
  \bibinfo{pages}{1--4} (\bibinfo{year}{2016})
  \dodoi{10.1016/j.ultras.2015.10.015}.

\bibitem{clorennec_local_2007}
\bibinfo{author}{D.~Clorennec}, \bibinfo{author}{C.~Prada}, and
  \bibinfo{author}{D.~Royer}, \enquote{\bibinfo{title}{Local and noncontact
  measurements of bulk acoustic wave velocities in thin isotropic plates and
  shells using zero group velocity {Lamb} modes}},  \bibinfo{journal}{Journal
  of Applied Physics} \textbf{101}(3), \bibinfo{pages}{034908}
  (\bibinfo{year}{2007}) \dodoi{10.1063/1.2434824}.

\bibitem{ces_characterization_2012}
\bibinfo{author}{M.~Cès}, \bibinfo{author}{D.~Royer}, and
  \bibinfo{author}{C.~Prada}, \enquote{\bibinfo{title}{Characterization of
  mechanical properties of a hollow cylinder with zero group velocity {Lamb}
  modes}},  \bibinfo{journal}{The Journal of the Acoustical Society of America}
  \textbf{132}(1), \bibinfo{pages}{180--185} (\bibinfo{year}{2012})
  \dodoi{10.1121/1.4726033}.

\bibitem{prada_influence_2009}
\bibinfo{author}{C.~Prada}, \bibinfo{author}{D.~Clorennec},
  \bibinfo{author}{T.~W. Murray}, and \bibinfo{author}{D.~Royer},
  \enquote{\bibinfo{title}{Influence of the anisotropy on zero-group velocity
  {Lamb} modes}},  \bibinfo{journal}{The Journal of the Acoustical Society of
  America} \textbf{126}(2), \bibinfo{pages}{620--625} (\bibinfo{year}{2009})
  \dodoi{10.1121/1.3167277}.

\bibitem{grunsteidl_determination_2018}
\bibinfo{author}{C.~Grünsteidl}, \bibinfo{author}{T.~Berer},
  \bibinfo{author}{M.~Hettich}, and \bibinfo{author}{I.~Veres},
  \enquote{\bibinfo{title}{Determination of thickness and bulk sound velocities
  of isotropic plates using zero-group-velocity {Lamb} waves}},
  \bibinfo{journal}{Applied Physics Letters} \textbf{112}(25),
  \bibinfo{pages}{251905} (\bibinfo{year}{2018}) \dodoi{10.1063/1.5034313}.

\bibitem{watzl_situ_2022}
\bibinfo{author}{G.~Watzl}, \bibinfo{author}{C.~Kerschbaummayr},
  \bibinfo{author}{M.~Schagerl}, \bibinfo{author}{T.~Mitter},
  \bibinfo{author}{B.~Sonderegger}, and \bibinfo{author}{C.~Grünsteidl},
  \enquote{\bibinfo{title}{In situ laser-ultrasonic monitoring of {Poisson}’s
  ratio and bulk sound velocities of steel plates during thermal processes}},
  \bibinfo{journal}{Acta Materialia} \textbf{235}, \bibinfo{pages}{118097}
  (\bibinfo{year}{2022}) \dodoi{10.1016/j.actamat.2022.118097}.

\bibitem{mezil_non_2014}
\bibinfo{author}{S.~Mezil}, \bibinfo{author}{J.~Laurent},
  \bibinfo{author}{D.~Royer}, and \bibinfo{author}{C.~Prada},
  \enquote{\bibinfo{title}{Non contact probing of interfacial stiffnesses
  between two plates by zero-group velocity {Lamb} modes}},
  \bibinfo{journal}{Applied Physics Letters} \textbf{105}(2),
  \bibinfo{pages}{021605} (\bibinfo{year}{2014}) \dodoi{10.1063/1.4890110}.

\bibitem{mezil_investigation_2015}
\bibinfo{author}{S.~Mezil}, \bibinfo{author}{F.~Bruno},
  \bibinfo{author}{S.~Raetz}, \bibinfo{author}{J.~Laurent},
  \bibinfo{author}{D.~Royer}, and \bibinfo{author}{C.~Prada},
  \enquote{\bibinfo{title}{Investigation of interfacial stiffnesses of a
  tri-layer using {Zero}-{Group} {Velocity} {Lamb} modes}},
  \bibinfo{journal}{The Journal of the Acoustical Society of America}
  \textbf{138}(5), \bibinfo{pages}{3202--3209} (\bibinfo{year}{2015})
  \dodoi{10.1121/1.4934958}.

\bibitem{valenza_ii_characterizing_2021}
\bibinfo{author}{J.~J. Valenza~II}, \enquote{\bibinfo{title}{Characterizing
  bulk liquids with zero-group-velocity {Lamb} modes}},
  \bibinfo{journal}{Measurement Science and Technology} \textbf{32}(10),
  \bibinfo{pages}{105302} (\bibinfo{year}{2021})
  \dodoi{10.1088/1361-6501/ac065b}.

\bibitem{thelen_laser-excited_2021}
\bibinfo{author}{M.~Thelen}, \bibinfo{author}{N.~Bochud},
  \bibinfo{author}{M.~Brinker}, \bibinfo{author}{C.~Prada}, and
  \bibinfo{author}{P.~Huber}, \enquote{\bibinfo{title}{Laser-excited elastic
  guided waves reveal the complex mechanics of nanoporous silicon}},
  \bibinfo{journal}{Nature Communications} \textbf{12}(1),
  \bibinfo{pages}{3597} (\bibinfo{year}{2021})
  \dodoi{10.1038/s41467-021-23398-0}.

\bibitem{kausel_number_2012}
\bibinfo{author}{E.~Kausel}, \enquote{\bibinfo{title}{Number and location of
  zero-group-velocity modes}},  \bibinfo{journal}{The Journal of the Acoustical
  Society of America} \textbf{131}(5), \bibinfo{pages}{3601--3610}
  (\bibinfo{year}{2012}) \dodoi{10.1121/1.3695398}.

\bibitem{shuvalov_backward_2008}
\bibinfo{author}{A.~L. Shuvalov} and \bibinfo{author}{O.~Poncelet},
  \enquote{\bibinfo{title}{On the backward {Lamb} waves near thickness
  resonances in anisotropic plates}},  \bibinfo{journal}{International Journal
  of Solids and Structures} \textbf{45}(11), \bibinfo{pages}{3430--3448}
  (\bibinfo{year}{2008}) \dodoi{10.1016/j.ijsolstr.2008.02.004}.

\bibitem{hussain_lamb_2012}
\bibinfo{author}{T.~Hussain} and \bibinfo{author}{F.~Ahmad},
  \enquote{\bibinfo{title}{Lamb modes with multiple zero-group velocity points
  in an orthotropic plate}},  \bibinfo{journal}{The Journal of the Acoustical
  Society of America} \textbf{132}(2), \bibinfo{pages}{641--645}
  (\bibinfo{year}{2012}) \dodoi{10.1121/1.4730891}.

\bibitem{karous_multiple_2019}
\bibinfo{author}{S.~Karous}, \bibinfo{author}{S.~Dahmen},
  \bibinfo{author}{M.~S. Bouhdima}, \bibinfo{author}{M.~B. Amor}, and
  \bibinfo{author}{C.~Glorieux}, \enquote{\bibinfo{title}{Multiple zero group
  velocity {Lamb} modes in an anisotropic plate: propagation along different
  crystallographic axes}},  \bibinfo{journal}{Canadian Journal of Physics}
  \textbf{97}(10), \bibinfo{pages}{1064--1074} (\bibinfo{year}{2019})
  \dodoi{10.1139/cjp-2018-0348}.

\bibitem{hernando_quintanilla_modeling_2015}
\bibinfo{author}{F.~Hernando~Quintanilla}, \bibinfo{author}{M.~J.~S. Lowe}, and
  \bibinfo{author}{R.~V. Craster}, \enquote{\bibinfo{title}{Modeling guided
  elastic waves in generally anisotropic media using a spectral collocation
  method}},  \bibinfo{journal}{The Journal of the Acoustical Society of
  America} \textbf{137}(3), \bibinfo{pages}{1180--1194} (\bibinfo{year}{2015})
  \dodoi{10.1121/1.4913777}.

\bibitem{maznev_surface_2009}
\bibinfo{author}{A.~A. Maznev} and \bibinfo{author}{A.~G. Every},
  \enquote{\bibinfo{title}{Surface acoustic waves with negative group velocity
  in a thin film structure on silicon}},  \bibinfo{journal}{Applied Physics
  Letters} \textbf{95}(1), \bibinfo{pages}{011903} (\bibinfo{year}{2009})
  \dodoi{10.1063/1.3168509}.

\bibitem{glushkov_multiple_2021}
\bibinfo{author}{E.~V. Glushkov} and \bibinfo{author}{N.~V. Glushkova},
  \enquote{\bibinfo{title}{Multiple zero-group velocity resonances in elastic
  layered structures}},  \bibinfo{journal}{Journal of Sound and Vibration}
  \textbf{500}, \bibinfo{pages}{116023} (\bibinfo{year}{2021})
  \dodoi{10.1016/j.jsv.2021.116023}.

\bibitem{cui_backward_2016}
\bibinfo{author}{H.~Cui}, \bibinfo{author}{W.~Lin}, \bibinfo{author}{H.~Zhang},
  \bibinfo{author}{X.~Wang}, and \bibinfo{author}{J.~Trevelyan},
  \enquote{\bibinfo{title}{Backward waves with double zero-group-velocity
  points in a liquid-filled pipe}},  \bibinfo{journal}{The Journal of the
  Acoustical Society of America} \textbf{139}(3), \bibinfo{pages}{1179--1194}
  (\bibinfo{year}{2016}) \dodoi{10.1121/1.4944046}.

\bibitem{Gravenkamp2012}
\bibinfo{author}{H.~Gravenkamp}, \bibinfo{author}{C.~Song}, and
  \bibinfo{author}{J.~Prager}, \enquote{\bibinfo{title}{{A numerical approach
  for the computation of dispersion relations for plate structures using the
  scaled boundary finite element method}}},  \bibinfo{journal}{Journal of Sound
  and Vibration} \textbf{331}, \bibinfo{pages}{2543--2557}
  (\bibinfo{year}{2012}) \dodoi{10.1016/j.jsv.2012.01.029}.

\bibitem{Gravenkamp2013a}
\bibinfo{author}{H.~Gravenkamp}, \bibinfo{author}{H.~Man},
  \bibinfo{author}{C.~Song}, and \bibinfo{author}{J.~Prager},
  \enquote{\bibinfo{title}{{The computation of dispersion relations for
  three-dimensional elastic waveguides using the Scaled Boundary Finite Element
  Method}}},  \bibinfo{journal}{Journal of Sound and Vibration} \textbf{332},
  \bibinfo{pages}{3756--3771} (\bibinfo{year}{2013}) \dodoi{10.1016/j.jsv.2013.02.007}.

\bibitem{langenberg_ultrasonic_2012}
\bibinfo{author}{K.-J. Langenberg}, \bibinfo{author}{R.~Marklein}, and
  \bibinfo{author}{K.~Mayer}, \emph{\bibinfo{title}{Ultrasonic {Nondestructive}
  {Testing} of {Materials}: {Theoretical} {Foundations} (translated from
  German)}}, \bibinfo{}{1} ed.  (\bibinfo{publisher}{CRC Press},
  \bibinfo{address}{Boca Raton}, \bibinfo{year}{2012}).

\bibitem{kiefer_elastodynamic_2022}
\bibinfo{author}{D.~A. Kiefer}, \emph{\bibinfo{title}{Elastodynamic
  quasi-guided waves for transit-time ultrasonic flow metering}}, {FAU}
  {Forschungen}, {Reihe} {B}, {Medizin}, {Naturwissenschaft}, {Technik}, no. 42
   (\bibinfo{publisher}{FAU University Press}, \bibinfo{address}{Erlangen},
  \bibinfo{year}{2022}) \dodoi{10.25593/978-3-96147-550-6}.

\bibitem{kiefer_gew_2022}
\bibinfo{author}{D.~A. Kiefer}, \enxquote{\bibinfo{title}{{GEW} dispersion
  script}}  (\bibinfo{year}{2022}),
  \dourl{https://github.com/dakiefer/GEW_dispersion_script},
  \dodoi{10.5281/zenodo.7010603} \bibinfo{note}{[Computer software]}.

\bibitem{mindlin_vibrations_1958}
\bibinfo{author}{R.~D. Mindlin}, \enquote{\bibinfo{title}{Vibrations of an
  infinite elastic plate at its cutoff frequencies}}, in
  \emph{\bibinfo{booktitle}{Proceedings, {Third} {US} {National} {Congress} of
  {Applied} {Mechanics}}} (\bibinfo{year}{1958}), pp.
  \bibinfo{pages}{225--226}.

\bibitem{pagneux_lamb_2006}
\bibinfo{author}{V.~Pagneux} and \bibinfo{author}{A.~Maurel},
  \enquote{\bibinfo{title}{Lamb wave propagation in elastic waveguides with
  variable thickness}},  \bibinfo{journal}{Proceedings of the Royal Society A:
  Mathematical, Physical and Engineering Sciences} \textbf{462}(2068),
  \bibinfo{pages}{1315--1339} (\bibinfo{year}{2006})
  \dodoi{10.1098/rspa.2005.1612}.

\bibitem{kiefer_transit_2022}
\bibinfo{author}{D.~A. Kiefer}, \bibinfo{author}{A.~Benkert}, and
  \bibinfo{author}{S.~J. Rupitsch}, \enquote{\bibinfo{title}{Transit {Time} of
  {Lamb} {Wave}-{Based} {Ultrasonic} {Flow} {Meters} and the {Effect} of
  {Temperature}}},  \bibinfo{journal}{IEEE Transactions on Ultrasonics,
  Ferroelectrics, and Frequency Control} \textbf{69}(10),
  \bibinfo{pages}{2975--2983} (\bibinfo{year}{2022})
  \dodoi{10.1109/TUFFC.2022.3201106}.

\bibitem{atkinson_multiparameter_2010}
\bibinfo{author}{F.~V. Atkinson} and \bibinfo{author}{A.~B. Mingarelli},
  \emph{\bibinfo{title}{Multiparameter {Eigenvalue} {Problems}:
  {Sturm}-{Liouville} {Theory}}}  (\bibinfo{publisher}{CRC Press},
  \bibinfo{year}{2010}).

\bibitem{plestenjak_spectral_2015}
\bibinfo{author}{B.~Plestenjak}, \bibinfo{author}{C.~I. Gheorghiu}, and
  \bibinfo{author}{M.~E. Hochstenbach}, \enquote{\bibinfo{title}{Spectral
  collocation for multiparameter eigenvalue problems arising from separable
  boundary value problems}},  \bibinfo{journal}{Journal of Computational
  Physics} \textbf{298}, \bibinfo{pages}{585--601} (\bibinfo{year}{2015})
  \dodoi{10.1016/j.jcp.2015.06.015}.

\bibitem{muhic_singular_2009}
\bibinfo{author}{A.~Muhič} and \bibinfo{author}{B.~Plestenjak},
  \enquote{\bibinfo{title}{On the singular two-parameter eigenvalue problem}},
  \bibinfo{journal}{The Electronic Journal of Linear Algebra} \textbf{18},
  \bibinfo{pages}{420--437} (\bibinfo{year}{2009})
  \dodoi{10.13001/1081-3810.1322}.

\bibitem{hochstenbach_solving_2019}
\bibinfo{author}{M.~E. Hochstenbach}, \bibinfo{author}{C.~Mehl}, and
  \bibinfo{author}{B.~Plestenjak}, \enquote{\bibinfo{title}{Solving {Singular}
  {Generalized} {Eigenvalue} {Problems} by a {Rank}-{Completing}
  {Perturbation}}},  \bibinfo{journal}{SIAM Journal on Matrix Analysis and
  Applications} \textbf{40}(3), \bibinfo{pages}{1022--1046}
  (\bibinfo{year}{2019}) \dodoi{10.1137/18M1188628}.

\bibitem{gravenkamp_efficient_2018}
\bibinfo{author}{H.~Gravenkamp}, \enquote{\bibinfo{title}{Efficient simulation
  of elastic guided waves interacting with notches, adhesive joints,
  delaminations and inclined edges in plate structures}},
  \bibinfo{journal}{Ultrasonics} \textbf{82}, \bibinfo{pages}{101--113}
  (\bibinfo{year}{2018}) \dodoi{10.1016/j.ultras.2017.07.019}.

\bibitem{finnveden_evaluation_2004}
\bibinfo{author}{S.~Finnveden}, \enquote{\bibinfo{title}{Evaluation of modal
  density and group velocity by a finite element method}},
  \bibinfo{journal}{Journal of Sound and Vibration} \textbf{273}(1),
  \bibinfo{pages}{51--75} (\bibinfo{year}{2004})
  \dodoi{10.1016/j.jsv.2003.04.004}.

\bibitem{kiefer_calculating_2019}
\bibinfo{author}{D.~A. Kiefer}, \bibinfo{author}{M.~Ponschab},
  \bibinfo{author}{S.~J. Rupitsch}, and \bibinfo{author}{M.~Mayle},
  \enquote{\bibinfo{title}{Calculating the full leaky {Lamb} wave spectrum with
  exact fluid interaction}},  \bibinfo{journal}{The Journal of the Acoustical
  Society of America} \textbf{145}(6), \bibinfo{pages}{3341--3350}
  (\bibinfo{year}{2019}) \dodoi{10.1121/1.5109399}.

\bibitem{fichtner_full_2010}
\bibinfo{author}{A.~Fichtner}, \emph{\bibinfo{title}{Full {Seismic} {Waveform}
  {Modelling} and {Inversion}}}, Advances in {Geophysical} and {Environmental}
  {Mechanics} and {Mathematics}  (\bibinfo{publisher}{Springer},
  \bibinfo{address}{Berlin}, \bibinfo{year}{2010}).

\bibitem{gravenkamp_high-order_2021}
\bibinfo{author}{H.~Gravenkamp}, \bibinfo{author}{A.~A. Saputra}, and
  \bibinfo{author}{S.~Duczek}, \enquote{\bibinfo{title}{High-{Order} {Shape}
  {Functions} in the {Scaled} {Boundary} {Finite} {Element} {Method}
  {Revisited}}},  \bibinfo{journal}{Archives of Computational Methods in
  Engineering} \textbf{28}(2), \bibinfo{pages}{473--494} (\bibinfo{year}{2021})
  \dodoi{10.1007/s11831-019-09385-1}.

\bibitem{HMP_LinQ2MEP}
\bibinfo{author}{M.~E. Hochstenbach}, \bibinfo{author}{A.~Muhi\v{c}}, and
  \bibinfo{author}{B.~Plestenjak}, \enquote{\bibinfo{title}{On linearizations
  of the quadratic two-parameter eigenvalue problem}},
  \bibinfo{journal}{Linear Algebra Appl.} \textbf{436}(8),
  \bibinfo{pages}{2725--2743} (\bibinfo{year}{2012})
  \dodoi{10.1016/j.laa.2011.07.026}.

\bibitem{TisseurMeerbergen}
\bibinfo{author}{F.~Tisseur} and \bibinfo{author}{K.~Meerbergen},
  \enquote{\bibinfo{title}{The quadratic eigenvalue problem}},
  \bibinfo{journal}{SIAM Review} \textbf{43}(2), \bibinfo{pages}{235--286}
  (\bibinfo{year}{2001}) \dodoi{10.1137/S0036144500381988}.

\bibitem{HMP22_arxiv}
\bibinfo{author}{M.~E. Hochstenbach}, \bibinfo{author}{C.~Mehl}, and
  \bibinfo{author}{B.~Plestenjak}, \enxquote{\bibinfo{title}{Solving singular
  generalized eigenvalue problems. {Part} {II}: projection and augmentation}}
  (\bibinfo{year}{2022}), \dodoi{10.48550/ARXIV.2208.01359}
  \bibinfo{note}{arXiv:2208.01359}.

\bibitem{multipareig_2023}
\bibinfo{author}{B.~Plestenjak}, \enxquote{\bibinfo{title}{{MultiParEig}}}
  (\bibinfo{year}{2023}),
  \dourl{https://www.mathworks.com/matlabcentral/fileexchange/47844-multipareig}
  \bibinfo{note}{{MATLAB} Central File Exchange}.

\bibitem{lu_newton-type_2022}
\bibinfo{author}{T.~Lu} and \bibinfo{author}{Y.~Su}, \enquote{\bibinfo{title}{A
  {Newton}-type method for two-dimensional eigenvalue problems}},
  \bibinfo{journal}{Numerical Linear Algebra with Applications} \textbf{29}(4),
  \bibinfo{pages}{e2430} (\bibinfo{year}{2022}) \dodoi{10.1002/nla.2430}.

\bibitem{jarlebring_computing_2011}
\bibinfo{author}{E.~Jarlebring}, \bibinfo{author}{S.~Kvaal}, and
  \bibinfo{author}{W.~Michiels}, \enquote{\bibinfo{title}{Computing all {Pairs}
  ($\lambda, \mu$) {Such} {That} $\lambda$ is a {Double} {Eigenvalue} of {A} +
  $\mu${B}}},  \bibinfo{journal}{SIAM Journal on Matrix Analysis and
  Applications} \textbf{32}(3), \bibinfo{pages}{902--927}
  (\bibinfo{year}{2011}) \dodoi{10.1137/100783157}.

\bibitem{muhic_method_2014}
\bibinfo{author}{A.~Muhič} and \bibinfo{author}{B.~Plestenjak},
  \enquote{\bibinfo{title}{A method for computing all values $\lambda$ such
  that {A} + $\lambda${B} has a multiple eigenvalue}},
  \bibinfo{journal}{Linear Algebra and its Applications} \textbf{440},
  \bibinfo{pages}{345--359} (\bibinfo{year}{2014})
  \dodoi{10.1016/j.laa.2013.10.015}.

\bibitem{gew_zgv_computation}
\bibinfo{author}{B.~Plestenjak} and \bibinfo{author}{D.~A. Kiefer},
  \enxquote{\bibinfo{title}{{GEW} {ZGV} computation}}  (\bibinfo{year}{2023}),
  \dourl{https://github.com/dakiefer/GEW_ZGV_computation},
  \dodoi{10.5281/zenodo.7537442} \bibinfo{note}{[Computer software]}.

\bibitem{Uhlig_SIMAX}
\bibinfo{author}{F.~Uhlig}, \enquote{\bibinfo{title}{Coalescing eigenvalues and
  crossing eigencurves of 1-parameter matrix flows}},  \bibinfo{journal}{SIAM
  Journal on Matrix Analysis and Applications} \textbf{41}(4),
  \bibinfo{pages}{1528--1545} (\bibinfo{year}{2020})
  \dodoi{10.1137/19M1286141}.

\bibitem{atkinson_multiparameter_1972}
\bibinfo{author}{F.~V. Atkinson}, \emph{\bibinfo{title}{Multiparameter
  eigenvalue problems}}  (\bibinfo{publisher}{Academic Press},
  \bibinfo{address}{New York}, \bibinfo{year}{1972}).

\bibitem{TracyJinadasa}
\bibinfo{author}{D.~S. Tracy} and \bibinfo{author}{K.~G. Jinadasa},
  \enquote{\bibinfo{title}{Partitioned {K}ronecker products of matrices and
  applications}},  \bibinfo{journal}{Canadian Journal of Statistics}
  \textbf{17}(1), \bibinfo{pages}{107--120} (\bibinfo{year}{1989})
  \dodoi{https://doi.org/10.2307/3314768}.

\bibitem{lanceleur_use_1993}
\bibinfo{author}{P.~Lanceleur}, \bibinfo{author}{H.~Ribeiro}, and
  \bibinfo{author}{J.~De~Belleval}, \enquote{\bibinfo{title}{The use of
  inhomogeneous waves in the reflection–transmission problem at a plane
  interface between two anisotropic media}},  \bibinfo{journal}{The Journal of
  the Acoustical Society of America} \textbf{93}(4),
  \bibinfo{pages}{1882--1892} (\bibinfo{year}{1993}) \dodoi{10.1121/1.406703}.

\end{thebibliography}
%
% or the generated "thebibliography" code in .bbl-file:
% 

\end{document}